\newcommand{\R}{\mathbb{R}}
\newcommand{\be}{\begin{equation}}
\newcommand{\ee}{\end{equation}}
\newcommand{\bea}{\begin{eqnarray}}
\newcommand{\eea}{\end{eqnarray}}
\newcommand{\beas}{\begin{eqnarray*}}
\newcommand{\eeas}{\end{eqnarray*}}
\newcommand{\ds}{\displaystyle}
\newtheorem{theorem}{Theorem}
\newtheorem{corollary}[theorem]{Corollary}
\newtheorem{definition}[theorem]{Definition}
\newtheorem{example}[theorem]{Example}
\newtheorem{lemma}[theorem]{Lemma}
\newtheorem{proposition}[theorem]{Proposition}
\newtheorem{remark}[theorem]{Remark}
\DeclareMathOperator*{\esssup}{ess\,sup}
\begin{document}

\title{\vskip -1.8cm Law-Invariant Return and Star-Shaped Risk Measures\thanks{This research was funded in part by the Netherlands Organization for Scientific Research under grant NWO Vici 2020--2027 (Laeven) and by an Ermenegildo Zegna Founder's Scholarship (Zullino). 
Emanuela Rosazza Gianin and Marco Zullino are members of Gruppo Nazionale per l’Analisi Matematica, la Probabilità e le loro Applicazioni (GNAMPA), Italy.}} 
\author{Roger J.~A.~Laeven\footnote{Corresponding author.} \\
{\footnotesize Dept.~of Quantitative Economics}\\
{\footnotesize University of Amsterdam, CentER}\\
{\footnotesize and EURANDOM, The Netherlands}\\
{\footnotesize \texttt{r.j.a.laeven@uva.nl}}\\
\and Emanuela Rosazza Gianin \\
{\footnotesize Dept.~of Statistics and Quantitative Methods}\\
{\footnotesize University of Milano Bicocca, Italy}\\
{\footnotesize \texttt{emanuela.rosazza1@unimib.it}}\\
\and Marco Zullino \\
{\footnotesize Dept.~of Mathematics and Applications}\\
{\footnotesize University of Milano Bicocca, Italy}\\
{\footnotesize \texttt{m.zullino@campus.unimib.it}}}

\date{This Version: \today }

\maketitle

\begin{abstract}

This paper presents novel characterization results for classes of law-invariant star-shaped functionals. 
We begin by establishing characterizations for positively homogeneous and star-shaped functionals that exhibit second- or convex-order stochastic dominance consistency.
Building on these characterizations, we proceed to derive Kusuoka-type representations for these functionals, shedding light on their mathematical structure and intimate connections to Value-at-Risk and Expected Shortfall. 
Furthermore, we offer representations of general law-invariant star-shaped functionals as robustifications of Value-at-Risk. 
Notably, our results are versatile, accommodating settings that may, or may not, involve monotonicity and/or cash-additivity.
All of these characterizations are developed within a general locally convex topological space of random variables, ensuring the broad applicability of our results in various financial, insurance and probabilistic contexts.\\[3mm]
\noindent \textbf{Keywords:} 
Return risk measures;
Star-shapedness;
Law-invariance;
SSD- and CSD-consistency;
Value-at-Risk;
Expected Shortfall.\\[3mm]
\noindent \textbf{MSC 2020 Classification:} 
Primary: 91B06, 91B30, 60E15; Secondary: 60H30, 62P05.\\[3mm]
\noindent \textbf{JEL Classification:} 
D81, G10, G20.
\end{abstract}

\section{Introduction}

Over the past decades, a large literature has developed the theory of monetary risk measures --- monotone and cash-additive functionals --- and analyzed their applications in a variety of fields including economics, finance, insurance, operations research and statistics. 
More recently, \cite{BLR18} introduced return risk measures --- monotone and positively homogeneous functionals --- and first results in the development of their static and dynamic theory and applications were obtained in \cite{BLR18,BLR21,LR22,LRZ23,ABL23}.
Whereas monetary risk measures provide absolute assessments of risk, return risk measures provide relative assessments of risk, evocative of the distinct roles played by absolute and relative risk aversion measurements.
The positive homogeneity property of return risk measures was relaxed by \cite{LR22,LRZ23} asserting the more general star-shapedness property. 

Law-invariant risk measures --- also referred to as law-determined or distribution-invariant risk measures --- play an important role in the theory and applications of risk measures.\footnote{Law invariance is intimately related to probabilistic sophistication introduced by \cite{MS92}; see also \cite{M02,MMR06,S11,RS14}.}
This is due to their simplicity, tractability and statistical appeal, being statistical functionals.
Many well-known risk measures are specific examples of law-invariant risk measures (e.g., Value-at-Risk, Expected Shortfall, the entropic risk measure and the $p$-norm).
Law-invariant representations of coherent, convex, quasi-convex and quasi-logconvex risk measures have been derived in \cite{K01,FR05,CMMM11,LR22}. 

In this paper, we establish new characterization results for classes of law-invariant star-shaped functionals.
The existing literature offers general representations for law-invariant convex and quasi-convex functionals that do not require monotonicity and/or cash-additivity; see the recent \cite{BKMS21}.
For a comprehensive discussion of non-monotone preferences and their applications, see \cite{A07}.
There is, however, a notable gap in the literature for star-shaped functionals, which we aim to fill.
In \cite{LR22}, representations for law-invariant \emph{quasi-logconvex} star-shaped risk measures are derived.
Furthermore, in \cite{CCMTW22}, representations for law-invariant \emph{monetary} star-shaped risk measures are obtained, and in \cite{HWWX22} the latter results are extended to allow for cash-subadditivity. 
By contrast, our representation results for law-invariant star-shaped functionals do not rely on the properties of monotonicity, quasi-logconvexity and/or cash-(sub)additivity.
Furthermore, whereas \cite{LR22,CCMTW22,HWWX22} primarily focus on $L^{\infty}$ as the space of random variables, we derive our results in a general locally convex topological space of random variables. 

The generality and versatility of our setting --- both in terms of properties that our functionals may satisfy and in terms of the spaces they are defined on --- make some of the mathematical proofs intricate. 
Furthermore, while in the case of convex functionals the properties of law-invariance and convex-order stochastic dominance consistency are equivalent, provided that the functionals exhibit lower semicontinuity, such equivalence does not hold in general for star-shaped functionals, even not when assuming lower semicontinuity. 
Therefore, it becomes necessary to separately examine and analyze these two properties in the case of star-shaped functionals.

That is, the primary contribution of this paper is to bridge the gap between existing general representations for law-invariant convex and quasi-convex functionals on the one hand and those pertaining to star-shaped functionals on the other hand.
We first establish representations for second-order and convex-order stochastic dominance consistent (SSD- and CSD-consistent, for short) star-shaped functionals. 
We also showcase the robustness of our proof strategies: 
when we impose the additional axioms of monotonicity and cash-additivity, our results align with those in \cite{CCMTW22}, but within a (much) broader space than $L^{\infty}$. 
Second, we unveil novel representations \`a la Kusuoka \cite{K01} pertaining to SSD- and CSD-consistent star-shaped functionals, including their monetary variants that are previously unexplored in the literature.
These characterization results may be viewed as a contribution of independent interest.
Value-at-Risk and Expected Shortfall are pivotal building blocks in these representations.
Third, we extend our analysis to general law-invariant star-shaped functionals as robustifications of Value-at-Risk. 
We demonstrate the validity of these representations even without the need for monotonicity or cash-additivity. 
Furthermore, in the case of law-invariant star-shaped risk measures that are also cash-additive, we present a more precise representation compared to that in Theorem~5 of~\cite{CCMTW22}, resulting in a reduced set over which minimization takes place. 
We demonstrate that these three sets of results extend to positively homogeneous functionals, 
thus generalizing \cite{ABL23,LR22}, and yielding in particular new representation results for SSD-consistent return risk measures.
We finally provide three examples to illustrate that law-invariant star-shaped risk measures may arise naturally, also when cash-(sub)additivity is not preserved.

All the proof strategies share a common approach, relying on the approximation of star-shaped functions by convex ones. 
Consequently, our proof strategies effectively encapsulate the concept of star-shapedness, providing flexibility and adaptability in a myriad of settings in which star-shapedness is assumed.

The remainder of this paper is organized as follows.
In Section~\ref{sec:prel}, we provide some preliminaries for law-invariant star-shaped functionals.
In Section~\ref{sec:mr}, we establish our first representation results for 
SSD- and CSD-consistent star-shaped functionals.
Section~\ref{sec:kr} establishes representations \`a la Kusuoka \cite{K01} for SSD- and CSD-consistent star-shaped functionals.
In Section~\ref{sec:vr}, we derive representations as robustifications of Value-at-Risk for general law-invariant star-shaped functionals.
Finally, Section~\ref{sec:ex} provides three illustrative examples.

\section{Preliminaries}\label{sec:prel}

Let $(\Omega,\mathcal{F},\mathbb{P})$ be a non-atomic probability space and let $L^{0}$ be the space of real-valued random variables. 
Equalities and inequalities between random variables are meant to hold almost surely (a.s.). 
Given a set $\mathcal{X}\subseteq L^{0}$, we denote by $\mathcal{X}_+$ the subspace of $\mathcal{X}$ containing only positive random variables. 
The usual Lebesgue spaces of functions are denoted by $L^p$, $p\in[0,+\infty]$.

\begin{definition}
We say that $X\sim Y$ if $X$ and $Y$ have the same law under $\mathbb{P}$. 
Furthermore, a set $\mathcal{X}\subseteq L^0$ is said to be law-invariant if for any $X,Y\in L^0$ with $X\in\mathcal{X}$ and $X\sim Y$ it holds that $Y\in\mathcal{X}$. 
A functional $f:\mathcal{X}\to\R\cup\{+\infty\}$ is said to be law-invariant if for any $X,Y\in\mathcal{X}$ with $X\sim Y$ we have $f(X)=f(Y)$.
\end{definition}

We need the following assumptions on the spaces used in the sequel, which will be imposed throughout the paper unless explicitly mentioned otherwise. 
Let $\mathcal{X},\mathcal{X}^*$ be two linear subspaces of $L^0$ and such that: 

\begin{itemize}
    \item $\mathcal{X}$,$\mathcal{X}^*$ are law-invariant;
    \item for any $X\in\mathcal{X}$ and $Y\in\mathcal{X}^*$ it holds that $XY\in L^1$;
    \item $L^{\infty}\subseteq\mathcal{X}\subseteq L^1$ and $L^{\infty}\subseteq\mathcal{X}^*\subseteq L^1$;
    \item $\mathcal{X}$ contains the constants.
\end{itemize}
We consider the weakest topology $\sigma(\mathcal{X},\mathcal{X}^*)$ on $\mathcal{X}$ such that the linear functional $f_Y(X):=\mathbb{E}[XY]$ is continuous for every $Y\in\mathcal{X}^*$. 
Under these hypotheses, $\mathcal{X}$ is a locally convex topological space. 
Additionally, for any $X\in\mathcal{X}$ and $c\in\R$, it holds that $X+c\in\mathcal{X}$, because $\mathcal{X}$ is a linear space that includes constants. 
We will often use this fact without further mentioning.
We recall that Orlicz spaces and $L^p$ spaces with $p\in[1,+\infty]$ are examples of spaces satisfying the previous assumptions.
\begin{definition}\label{def:star}
    A functional $f:\mathcal{X}\to\R\cup\{+\infty\}$ with $f(0)<+\infty$ is star-shaped if for any $\lambda\in(0,1)$ and $X\in\mathcal{X}$ it holds that:
    $$f(\lambda X)\leq \lambda f(X)+(1-\lambda) f(0).$$ 
    Furthermore, a functional $\rho:\mathcal{X}\to\R\cup\{+\infty\}$ with $\rho(0)<+\infty$ is a risk measure if $\rho$ is increasing w.r.t.~the usual lattice order in $L^{0}$, i.e., if for any $X,Y\in\mathcal{X}$ such that $X\geq Y$ it holds that $\rho(X)\geq\rho(Y)$.
    We define the proper domain of $f$ as \mbox{$\mbox{Dom}(f):=\{X\in\mathcal{X}: f(X)<+\infty\}$}, and analogously for $\rho$.
\end{definition}
\noindent From this point onward, we will assume that any functional $f:\mathcal{X}\to\R\cup\{+\infty\}$ verifies $f(0)<+\infty$, i.e., $0\in\mbox{Dom}(f)$, unless specified otherwise.

We note that, different from a substantial part of the literature on risk measures, Definition~\ref{def:star} adopts the sign convention that risk measures satisfy increasing rather than decreasing monotonicity.
We recall that a functional $f:\mathcal{X}\to\R\cup\{+\infty\}$ is said to be lower semicontinuous if its level sets are $\sigma(\mathcal{X},\mathcal{X}^*)$-closed. 
In addition, if $f$ is law-invariant and convex, and $\mathcal{X}$ is a rearrangement invariant space, $\sigma(\mathcal{X},\mathcal{X}^*)$-lower semicontinuity is equivalent to the Fatou property, i.e., for any sequence $(X_n)_{n\in\mathbb{N}}\subseteq \mathcal{X}$ and $X\in\mathcal{X}$ s.t.~$X_n\to X$ a.s.~with $\ds\sup_{n\in\mathbb{N}}|X_n|\in\mathcal{X}$ it holds that $f(X)\leq\ds\liminf_{n\to\infty}{f(X_n)}$. 
If $f$ is monotone (thus, $f$ is a risk measure) these two properties are also equivalent to continuity from below, i.e., for any $(X_n)_{n\in\mathbb{N}}\subseteq \mathcal{X}$ and $X\in\mathcal{X}$ such that $X_n\uparrow X$ a.s.~we have $f(X)=\ds\lim_{n\to\infty}f(X_n)$. 

In the subsequent definition, we present a non-exhaustive set of axioms that a risk measure (or a functional) may satisfy.
\begin{definition}
    Let $\rho:\mathcal{X}\to\R\cup\{+\infty\}$ be a risk measure. 
    Then $\rho$ is said to be
    \begin{itemize}
    \item Normalized, if $\rho(0)=0$.
        \item Convex, if $\rho(\lambda X+(1-\lambda)Y)\leq \lambda\rho(X)+(1-\lambda)\rho(Y),$ for any $\lambda\in(0,1)$ and $X,Y\in\mathcal{X}$ with $X\neq Y$.
        \item Positively homogeneous, if $\rho(\lambda X)=\lambda\rho(X)$ for any $\lambda\in\R_+$ and $X\in\mathcal{X}$.
        \item Sublinear, if $\rho$ satisfies convexity and positive homogeneity.
        \item Cash-additive, if $\rho(X+m)=\rho(X)+m$ for any $m\in\R$ and $X\in\mathcal{X}$.
        \item Cash-subadditive, if $\rho(X+m)\leq\rho(X)+m$ for any $m\in\R_+$ and $X\in\mathcal{X}$.
    \end{itemize}
      A monetary risk measure $\rho:\mathcal{X}\to\R\cup\{+\infty\}$ is a risk measure that satisfies cash-additivity. 
      Similarly, a return risk measure $\rho:\mathcal{X}_+\to\R\cup\{+\infty\}$ is a risk measure verifying positive homogeneity.\footnote{In this paper, we consistently present our findings for risk measures defined across the entire space $\mathcal{X}$. 
      Nonetheless, these findings encompass the setting of return risk measures defined on $\mathcal{X}_+$ as a specific case, achieved by constraining the domain.}$^{,}$\footnote{Canonical examples of return risk measures are Orlicz premia --- Luxemburg norms from a purely mathematical perspective.
      Orlicz premia and their associated Orlicz spaces have been analyzed in a large literature in actuarial and financial mathematics (e.g., \cite{HG82,BF08,BF09,CL09,D12,LS14,BLR18}).}
    \label{def:axioms}
\end{definition}
    The axioms in Definition~\ref{def:axioms} have been extensively studied in the literature (see, e.g., \cite{GdVH84,FR02,FS02,FS11,D02,D06,ELKR09,LS13,DK13} and the references therein), also in connection with the law-invariance property (see, e.g., \cite{K01,FR05,BKMS21}). 

Let us briefly revisit the core stochastic dominance concepts, particularly first-, second- and convex-order stochastic dominance. 
Under first-order stochastic dominance, we say that, for given $X,Y\in\mathcal{X}$, $X$ dominates $Y$ if $\mathbb{E}[g(X)]\geq\mathbb{E}[g(Y)]$ for every increasing function $g:\mathbb{R}\to\mathbb{R}$ such that the expectations exist; it is denoted as $X\succeq_1Y$. 
Similarly, under second-order stochastic dominance, $X$ dominates $Y$ if $\mathbb{E}[g(X)]\geq\mathbb{E}[g(Y)]$ for every increasing and convex function $g:\mathbb{R}\to\mathbb{R}$; it is denoted as $X\succeq_2Y$. 
Finally, $X$ dominates $Y$ in the convex order sense if $\mathbb{E}[g(X)]\geq\mathbb{E}[g(Y)]$ for every (not necessarily increasing) convex function $g:\mathbb{R}\to\mathbb{R}$; it is denoted as $X\succeq_c Y.$
We write $\sim_{\cdot}$ when both $\succeq_{\cdot}$ and $\preceq_{\cdot}$ apply.

\begin{definition}
    A risk measure $\rho:\mathcal{X}\to\R\cup\{+\infty\}$ is:
    \begin{itemize}
        \item FSD-consistent, if $\rho(X)\geq\rho(Y)$ for any $X,Y\in\mathcal{X}$ such that $X\succeq_1 Y$.
        \item SSD-consistent, if $\rho(X)\geq\rho(Y)$ for any $X,Y\in\mathcal{X}$ such that $X\succeq_2 Y$.
        \end{itemize}
   A functional $f:\mathcal{X}\to\R\cup\{+\infty\}$ is CSD-consistent if $f(X)\geq f(Y)$ for any $X,Y\in\mathcal{X}$ such that $X\succeq_c Y$. 
\end{definition}

\begin{remark}
    We note that first-order stochastic dominance can be equivalently formulated as $$X\succeq_1 Y \iff VaR_{\beta}(X)\geq VaR_{\beta}(Y) \ \forall \beta\in[0,1],$$ 
    where $$VaR_{\beta}(X):=\inf\{x\in\R| \mathbb{P}(X\leq x)\geq \beta\}, \mbox{ for any } X\in\mathcal{X},$$ is the usual Value-at-Risk.
    Similarly, second-order stochastic dominance is equivalent to
    $$X\succeq_2 Y \iff ES_{\beta}(X)\geq ES_{\beta}(Y) \ \ \ \forall \beta\in[0,1),$$
    where, for any $X\in\mathcal{X}$,
    $$ES_{\beta}(X):=\frac{1}{1-\beta}\int_{\beta}^1VaR_{m}(X)dm, \mbox{ if } \beta\in[0,1), \ \mbox{ and } ES_1(X):=\esssup(X),$$
is the usual Expected Shortfall.\footnote{It is worth noting that the relation $ES_{\beta}(X)\geq ES_{\beta}(Y)$ for all $\beta\in[0,1)$ is equivalent to \mbox{$ES_{\beta}(X)=ES_{\beta}(Y)$} for all $\beta\in[0,1]$. 
Clearly, the second condition implies the first one. 
Moreover, for any $Z\in\mathcal{X}\subseteq L^1$, by Proposition 2.37 in~\cite{PR07} we have $ES_1(Z)=\ds\lim_{\beta\to 1}ES_{\beta}(Z)$. 
Thus, it holds that: $ES_1(X)=\ds\lim_{\beta\to1}ES_{\beta}(X)\geq\ds\lim_{\beta\to 1}ES_{\beta}(Y)=ES_1(Y)$. 
In the following, we will interchangeably use these two equivalent formulations of SSD.} 
It is clear by this characterization that first-order stochastic dominance implies second-order stochastic dominance, i.e., if $X\succeq_1 Y$ then the relationship $X\succeq_2 Y$ also holds. Furthermore, convex-order stochastic dominance is equivalent to $X\succeq_2 Y$ with the further constraint \mbox{$\mathbb{E}(X)=\mathbb{E}(Y),$} i.e., $ES_{\beta}(X)\geq ES_{\beta}(Y) \ \forall \beta\in[0,1)$ and $\mathbb{E}(X)=\mathbb{E}(Y).$
Finally, we underline that if a functional $f:\mathcal{X}\to\R\cup\{+\infty\}$ is consistent with one of these three stochastic dominance conditions, then $f$ is law-invariant. 
However, it is important to note that the converse implication does not hold, as we will explore in what follows.
See \cite{FS11,BKMS21} for further details on this subject.
\label{rem:LI}
\end{remark}

It is well-known that a functional $f:\mathcal{X}\to\R\cup\{+\infty\}$ that adheres to either FSD- or SSD-consistency also displays monotonicity with respect to the standard pointwise order relation in $L^0$. 
Therefore, FSD- or SSD-consistency are incompatible with functionals that lack monotonicity. 
In contrast to FSD- and SSD-consistency, the criterion of CSD-consistency does not entail monotonicity.

\section{SSD- and CSD-Consistent Star-Shaped Functionals}\label{sec:mr}

In this section, we explore the implications of the properties of SSD- and CSD-consistency for star-shaped risk measures within the framework of min-max representations. 
It has been shown in the literature (see e.g., \cite{CCMTW22}) that not all law-invariant star-shaped risk measures can be expressed as the minimum of law-invariant convex risk measures. 
This is exemplified by VaR, which falls into the category of star-shaped risk measures that cannot be represented as the minimum of law-invariant convex risk measures. 
The following theorem establishes the connection between SSD-consistent star-shaped risk measures and the minimum of SSD-consistent convex risk measures. 

\begin{theorem}
A risk measure $\rho:\mathcal{X}\to\R\cup\{+\infty\}$ is SSD-consistent and star-shaped if and only if there exist a set of indexes $\Gamma$ and a family of SSD-consistent convex risk measures $(\tilde \rho_{\gamma})_{\gamma\in\Gamma}$ with $\tilde \rho_{\gamma}:\mathcal{X}\to\R\cup\{+\infty\}$ such that $\tilde \rho_{\gamma}(0)=\rho(0)$ for all $\gamma\in\Gamma$ and
\begin{equation}
    \rho(X)=\min_{\gamma\in\Gamma}\tilde \rho_{\gamma}(X), \ X\in\mathcal{X}.
    \label{eq:SSrap}
\end{equation}
In addition, $\rho:\mathcal{X}\to\R\cup\{+\infty\}$ is a SSD-consistent and positively homogeneous risk measure if and only there exists a family of SSD-consistent and sublinear risk measures $(\tilde \rho_{\gamma})_{\gamma\in\Gamma}$ such that $\rho$ can be represented as in Equation~\eqref{eq:SSrap}.
\label{th:mainth}
\end{theorem}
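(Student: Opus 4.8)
The plan is to establish the two implications of the first equivalence separately, and then indicate the modifications needed for the positively homogeneous variant. The whole argument is driven by the idea announced in the introduction: approximating the star-shaped $\rho$ by convex functionals touching it at each point.

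For the \emph{if} direction, assume $\rho=\min_{\gamma\in\Gamma}\tilde\rho_\gamma$ with each $\tilde\rho_\gamma$ an SSD-consistent convex risk measure satisfying $\tilde\rho_\gamma(0)=\rho(0)$. Then $\rho(0)=\tilde\rho_\gamma(0)<+\infty$, so $0\in\mathrm{Dom}(\rho)$. Monotonicity and SSD-consistency pass to the pointwise minimum by the standard argument: if $X\succeq_2 Y$ then $\tilde\rho_\gamma(X)\geq\tilde\rho_\gamma(Y)\geq\rho(Y)$ for every $\gamma$, whence $\rho(X)=\min_\gamma\tilde\rho_\gamma(X)\geq\rho(Y)$ (the same reasoning with the pointwise order shows $\rho$ is a risk measure). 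Star-shapedness follows from convexity together with $\tilde\rho_\gamma(0)=\rho(0)$: for $\lambda\in(0,1)$ one has $\tilde\rho_\gamma(\lambda X)\leq\lambda\tilde\rho_\gamma(X)+(1-\lambda)\rho(0)$, and minimizing over $\gamma$ on both sides yields $\rho(\lambda X)\leq\lambda\rho(X)+(1-\lambda)\rho(0)$. This direction is routine.

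The substance lies in the \emph{only if} direction. After normalizing to $\rho(0)=0$ (replace $\rho$ by $\rho-\rho(0)$, which preserves SSD-consistency and turns star-shapedness into $\rho(\lambda X)\leq\lambda\rho(X)$), the plan is to build, for each $Y\in\mathrm{Dom}(\rho)$, a convex SSD-consistent risk measure $\tilde\rho_Y$ with $\tilde\rho_Y\geq\rho$ pointwise, $\tilde\rho_Y(Y)=\rho(Y)$, and $\tilde\rho_Y(0)=0$; one then takes $\Gamma=\mathrm{Dom}(\rho)$. Granting such a family, the representation is immediate: $\min_{Y}\tilde\rho_Y(X)\geq\rho(X)$ because every $\tilde\rho_Y$ dominates $\rho$, while the value is attained at $Y=X$ since $\tilde\rho_X(X)=\rho(X)$ (points outside $\mathrm{Dom}(\rho)$ are handled trivially, the majorants there also being $+\infty$). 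The construction itself is a convexification confined to the SSD-upper set of $Y$: informally,
\[
\tilde\rho_Y(X)=\inf\Big\{\textstyle\sum_i\lambda_i\rho(Z_i)\ :\ \lambda_i\geq0,\ \sum_i\lambda_i=1,\ Z_i\succeq_2 Y \text{ (allowing } Z_i=0),\ \sum_i\lambda_i Z_i=X\Big\},
\]
the greatest convex minorant, extended by $+\infty$, of $\rho$ restricted to $\{Z\in\mathcal{X}:Z\succeq_2 Y\}$ together with the base point $0$.

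The main obstacle is verifying that this $\tilde\rho_Y$ has all four required properties at once. Convexity is built into the construction. Domination $\tilde\rho_Y\geq\rho$ and the \emph{touching} identity $\tilde\rho_Y(Y)=\rho(Y)$ are exactly where star-shapedness is used: star-shapedness is equivalent to the radial monotonicity of $\lambda\mapsto\rho(\lambda X)/\lambda$, which forces the chord from the origin to $\big(Y,\rho(Y)\big)$ onto the lower boundary of the convex hull, so that no gap opens at $Y$. The delicate point is SSD-consistency of $\tilde\rho_Y$: the convexification must respect the stochastic-order structure, which is why it is taken over the upper set $\{Z\succeq_2 Y\}$ rather than over all of $\mathcal{X}$, and one must check that consistency survives convex combinations and the infimum; here the SSD-consistency of $\rho$ on the upper set together with the $ES$-characterization $X\succeq_2 Y\iff ES_\beta(X)\geq ES_\beta(Y)$ from Remark~\ref{rem:LI} are the essential tools. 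One also has to confirm that $\tilde\rho_Y$ is a genuine risk measure (finite at $0$, monotone) on the general locally convex space $\mathcal{X}$ and to carry out the $+\infty$/proper-domain bookkeeping. For the positively homogeneous variant, $\rho(0)=0$ holds automatically and one replaces the convex-hull-with-base-point by its conical (positively homogeneous) analogue over the SSD-upper set of $Y$, producing \emph{sublinear} SSD-consistent pieces; the identical domination-and-touching argument then yields the min representation, with sublinearity in place of convexity. I expect the verification that this conical construction simultaneously preserves SSD-consistency and positive homogeneity to be the analogue of the main obstacle in that case.
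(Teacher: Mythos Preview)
Your ``if'' direction is fine and matches the paper. The gap is in the ``only if'' construction: your convex-envelope $\tilde\rho_Y$ need not dominate $\rho$, so the identity $\rho=\min_Y\tilde\rho_Y$ can fail. Take any star-shaped, SSD-consistent $\rho$ that is genuinely non-convex, so there exist $Z_1,Z_2\in\mathcal{X}$ with $\rho\big(\tfrac12 Z_1+\tfrac12 Z_2\big)>\tfrac12\rho(Z_1)+\tfrac12\rho(Z_2)$. Choose $Y$ to be a large negative constant, so that $Z_1,Z_2\succeq_2 Y$ and $Y\in\mathrm{Dom}(\rho)$ by monotonicity. With $X=\tfrac12 Z_1+\tfrac12 Z_2$, your formula (using that very decomposition) gives
\[
\tilde\rho_Y(X)\ \leq\ \tfrac12\rho(Z_1)+\tfrac12\rho(Z_2)\ <\ \rho(X),
\]
whence $\min_{Y'}\tilde\rho_{Y'}(X)<\rho(X)$. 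Your chord heuristic only shows the \emph{touching} identity $\tilde\rho_Y(Y)=\rho(Y)$ (that part is correct, since $Z_i\succeq_2 Y$ forces $\rho(Z_i)\geq\rho(Y)$); it says nothing about points of the SSD-upper set other than $Y$ itself. Star-shapedness controls $\rho$ along each ray through the origin, not across different rays, so a convex hull taken over a multi-dimensional set can cut below $\rho$.

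The paper avoids this by confining each piece to a single $\sim_2$-ray: for $Z\in\mathrm{Dom}(\rho)$ one sets $\rho_Z(X)=\alpha\rho(Z)+(1-\alpha)\rho(0)$ when $X\sim_2\alpha Z$ for some $\alpha\in[0,1]$, and $+\infty$ otherwise. On that ray star-shapedness gives $\rho_Z\geq\rho$ directly, and domination elsewhere is trivial. One then SSD-regularizes via $\tilde\rho_Z(X)=\inf\{\rho_Z(W):W\succeq_2 X\}$; domination survives because $\rho_Z(W)\geq\rho(W)\geq\rho(X)$, and convexity of $\tilde\rho_Z$ is checked using sublinearity of $ES_\beta$. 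The positively homogeneous case uses the same single-ray idea with $\alpha\in[0,\infty)$. So the delicate point is not SSD-consistency as you anticipated, but domination---and the fix is to make the convexified set one-dimensional, not the whole upper set.
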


\begin{proof}
We provide a detailed proof for the star-shaped case, followed by a brief sketch of the proof for positively homogeneous risk measures, since it is similar to the star-shaped case.
\smallskip
\begin{center} 
\textit{The star-shaped case}  
\end{center}
`Only if': We start by proving that there exists a family of law-invariant convex functionals of which the pointwise minimum is $\rho$.
Let us consider $\Gamma:=\mbox{Dom}(\rho)$ and the family of functionals $(\rho_Z)_{Z\in\Gamma}$, whose elements $\rho_Z:\mathcal{X}\to\R\cup\{+\infty\}$ are defined as:

\begin{equation*}
    \rho_Z(X)=
    \begin{cases}
        \alpha \rho(Z)+(1-\alpha)\rho(0) \ \ &\mbox{ if } \exists \alpha\in[0,1] \mbox{ s.t. } X\sim_2\alpha Z, \\
        +\infty &\mbox{ otherwise.}
    \end{cases}
\end{equation*}
\smallskip

\textit{$\rho_Z$ is well-defined:}  
Indeed, if there exists $\alpha\in[0,1]$ such that $X\sim_2\alpha Z$, then $\alpha$ is unique. 
To see this, let us assume that there exist $\alpha_1,\alpha_2\in[0,1]$ such that $X\sim_2\alpha_1Z\sim_2\alpha_2Z,$ i.e., $$ES_{\beta}(X)=ES_{\beta}(\alpha_1 Z)=ES_{\beta}(\alpha_2 Z) \ \ \forall \beta\in[0,1].$$ 
By positive homogeneity of the Expected Shortfall we have $(\alpha_1-\alpha_2)ES_{\beta}(Z)=0 \ \ \forall \beta\in[0,1].$ 

Now we prove that $ES_{\beta}(Z)=0$ for any $\beta\in[0,1]$ if and only if $Z\equiv0$. 
Clearly, by normalization, if $Z=0$ then $ES_{\beta}(0)=0 \ \forall \beta\in[0,1]$. 
Conversely, it is straightforward to prove that if $ES_{\beta}(Z)=0$ for all $\beta\in[0,1]$, then $m\mapsto VaR_m(Z)=0$ a.s.~$m\in [0,1]$. 
Indeed, choosing $\beta=1$, by definition of $ES_1(Z)$, we have $\esssup(Z)\leq 0$, which by monotonicity yields $VaR_m(Z)\leq0$ a.s.~$m\in[0,1]$ and the properties of the Lebesgue integral ensure $VaR_m(Z)=0$ a.s.~$m\in [0,1]$. 
Hence, by contradiction, let us suppose there exists a non-null measure set $A$ such that $Z<0$ on $A$, then using standard measure-theoretic arguments we can find a measurable set $B\subseteq A$ with $\mathbb{P}(B)>0$ and a constant $c<0$ such that $Z<c$ on $B$.\footnote{To prove this statement, let $B$ a non-null measure set such that $X<0$ on $B$. 
Setting $B_n:=\{\omega\in\Omega: X(\omega)\leq-\frac{1}{n}\}$, then $B=\bigcup_{n=1}^{\infty}B_n$. 
Thus, $B_n\subseteq B_{n+1}$ for all $n\in\mathbb{N}$ and $0<\mathbb{P}(B)=\mathbb{P}(\bigcup_{n=1}^{\infty}B_n)\leq\sum_{n=1}^{\infty}\mathbb{P}(B_n)$. 
Hence, there exists $\bar{n}\in\mathbb{N}$ such that $A:=\sum_{n=1}^{\bar{n}}B_n$ and $\mathbb{P}(A)>0$. 
Moreover, $X\leq -\frac{1}{\bar{n}}$ on $A$, taking $c:=-\frac{1}{\bar{n}}$ the thesis follows.} 
Defining $Z_1=Z$ on $\Omega\setminus B$ and $Z_1=c$ on $B$, it follows by monotonicity that $VaR_m(Z)\leq VaR_m(Z_1)$ for any $m\in[0,1]$. 
Moreover, $\mathbb{P}(Z_1\leq c)\geq \mathbb{P}(B)\geq m$ for any $m\in[0,\mathbb{P}(B)]$, yielding $$VaR_m(Z_1):=\inf\{x\in\R: \mathbb{P}(Z_1\leq x)\geq m\}\leq \inf\{x\in\R: \mathbb{P}(Z_1\leq x)\geq \mathbb{P}(B)\}\leq c <0,$$  for any $m\in[0,\mathbb{P}(B)]$, thus $VaR_m(Z)\leq VaR_m(Z_1)<0$ for all $m\in[0,\mathbb{P}(B)]$. 
The contradiction follows by recalling that $VaR_m(Z)=0$ a.s.~$m\in[0,1]$.  

So, if $Z\not\equiv0$, $\alpha_1=\alpha_2$. 
The case $Z\equiv0$ is obvious given that in this case the only possibility to have a finite value of $\rho$ is to consider $X\equiv0$. 
In particular, if $X\sim_2 Y$, then either $X\sim_2 Y\sim_2\alpha Z$ holds for some $\alpha\in[0,1]$ yielding $\rho_Z(X)=\alpha \rho(Z)+(1-\alpha)\rho(0)=\rho_Z(Y)$, or $X\sim_2 Y$ but $X\not\sim_2\alpha Z$ for any $\alpha\in[0,1]$, leading to $\rho_Z(X)=\rho_Z(Y)=+\infty$. 
\smallskip

\textit{Properties of $\rho_Z$:} 
We start by proving $\rho_Z(X)\geq \rho(X)$ for any $X\in\mathcal{X}$. 
To see this, let us assume there exists $\alpha\in[0,1]$ such that $X\sim_2\alpha Z$ (otherwise there is nothing to prove). 
Then, $\rho_Z(X)=\alpha \rho(Z)+(1-\alpha)\rho(0)\geq \rho(\alpha Z)= \rho(X)$, by the SSD-consistency and star-shapedness of $\rho$.
Furthermore, if $X\in\Gamma$, we can take $Z=X$, obtaining $\rho_X(X)=\rho(X)$. 
If $X\not\in\Gamma$, then $\rho_Z(X)=+\infty$ for any $Z\in\Gamma$. 
Indeed, assuming, by contradiction, that $\rho_Z(X)<+\infty$, there exists $\alpha\in[0,1]$ such that $\rho_Z(X)=\alpha\rho(Z)+(1-\alpha)\rho(0)\geq\rho(\alpha Z)=\rho(X)=+\infty,$ which leads to a contradiction. 
Thus, $\rho_Z(X)=\rho(X)=+\infty.$ 
In summary, we have established that $\rho(X)=\ds\min_{Z\in\Gamma}\rho_Z(X)$.

We prove now that $\rho_Z$ is law-invariant.  
Indeed, when $X\sim Y$, the law-invariance property of $ES_{\beta}(\cdot)$ for all $\beta\in[0,1]$ entails $ES_{\beta}(X)=ES_{\beta}(Y)$ for any $\beta\in[0,1]$. 
Applying the result established above, we can deduce $\rho_Z(X)=\rho_Z(Y)$.
\smallskip

We are now prepared to define the family of SSD-consistent and convex risk measures $(\rho_Z)_{Z\in\Gamma}$. 
For any $Z\in\Gamma$, we set $$\tilde\rho_Z(X):=\inf\{\rho_Z(Y): Y\succeq_2 X\}, \ X\in\mathcal{X}.$$

\textit{Properties of $\tilde\rho_Z$:} 
$\tilde\rho_Z$ is SSD-consistent.  
Indeed, for any $X_1,X_2\in\mathcal{X}$ with $X_1\succeq_2 X_2$ we have that if $Y\succeq_2 X_1$ then $Y\succeq_2X_2$, thus $\tilde\rho_Z(X_1)\geq\tilde\rho_Z(X_2).$ 
Hence, $\tilde\rho_Z$ is also monotone and law-invariant (see Remark~\ref{rem:LI}).
By definition of the infimum, it follows that $\tilde\rho_Z\leq \rho_Z$. 
Moreover, for each fixed $X\in\mathcal{X}$ and $Y\succeq_2 X$ it holds that $\rho_Z(Y)\geq \rho(Y)\geq \rho(X)$ (as shown in the first step of the proof), taking the infimum over $Y\succeq_2 X$ we obtain $\tilde\rho_Z(X)\geq \rho(X)$. 
Summing up, $\rho(X)\leq\tilde\rho_Z(X)\leq \rho_Z(X)$, hence $\rho(X)=\min_{Z\in\Gamma}\tilde\rho_Z(X)$. 
Now we prove convexity of $\tilde\rho_Z$. 
To see this, let us consider $X_1,X_2\in\mathcal{X}$ with $X_1\neq X_2$ and $\lambda\in(0,1)$.  
If $\tilde\rho_Z(X_1)=+\infty$ or $\tilde\rho_Z(X_2)=+\infty$ there is nothing to prove. 
Moreover, if $\tilde\rho_Z(\lambda X_1+(1-\lambda)X_2)=+\infty$ then either $\tilde\rho_Z(X_1)=+\infty$ or $\tilde\rho_Z(X_2)=+\infty$. 
Indeed, if $\tilde\rho_Z(X_i)<+\infty$ for $i=1,2$, then there exist $\alpha_i\in[0,1]$, $i=1,2$ such that $\tilde\rho_Z(X_i)=\alpha_i\rho(Z)$, $Y_i=\alpha_iZ\succeq_2 X_i$ and $Y_i$ attains the infimum\footnote{See Proposition~\ref{prop:Kusoka} for a detailed proof of this statement.} in the definition of $\tilde\rho_Z(X_i)$. 
This yields $(\lambda\alpha_1+(1-\lambda)\alpha_2)Z\succeq_2\lambda X_1+(1-\lambda)X_2$, by positive homogeneity and subadditivity of Expected Shortfall. 
Thus by definition of $\tilde\rho_Z$ we have $\tilde\rho_Z(\lambda X_1+(1-\lambda)X_2)\leq (\lambda\alpha_1+(1-\lambda)\alpha_2)\rho(Z)<+\infty$, hence a contradiction. 
So, we only need to check the case $\tilde\rho_Z(\lambda X_1+(1-\lambda)X_2),\tilde\rho_Z(X_1),\tilde\rho_Z(X_2)<+\infty$. 
In this case we find at least one random variable $Y_i\succeq_2 X_i$ such that $Y_i\sim_2\alpha_i Z$, $\alpha_i\in[0,1]$, $\rho_Z(Y_i)=\alpha_i\rho(Z)$ for $i=1,2$. 
Furthermore, it holds that 
\begin{align*}
   &ES_{\beta}(\lambda X_1+(1-\lambda)X_2) \leq \lambda ES_{\beta}(X_1)+(1-\lambda)ES_{\beta}(X_2) \\
   &\leq ES_{\beta}(\lambda  Y_1)+ES_{\beta}((1-\lambda) Y_2)=ES_{\beta}(\lambda \alpha_1Z+(1-\lambda)\alpha_2Z),
\end{align*}
for any $\beta\in[0,1]$. 
The first inequality follows from sublinearity of $ES_{\beta}$, the second inequality is due to the condition $Y_i\succeq_2 X_i$ for $i=1,2$, while the equality holds by positive homogeneity of $ES_{\beta}$, SSD-consistency of $ES_{\beta}$ and the relation $Y_i\sim_2\alpha_iZ$ for $i=1,2$. 
Thus, $\lambda \alpha_1Z+(1-\lambda)\alpha_2 Z\succeq_2\lambda X_1+(1-\lambda)X_2$ and
\begin{align*}
&\tilde\rho_Z(\lambda X_1+(1-\lambda)X_2)\leq \tilde\rho_Z(\lambda \alpha_1 Z+(1-\lambda)\alpha_2 Z)\\
&\leq\rho_Z(\lambda \alpha_1 Z+(1-\lambda)\alpha_2 Z)=\lambda \rho_Z(\alpha_1 Z)+(1-\lambda)\rho_Z(\alpha_2 Z) \\
&=\lambda \rho_Z(Y_1)+(1-\lambda)\rho_Z(Y_2),
\end{align*}
where the first inequality follows from SSD-consistency of $\tilde\rho_Z$, the second inequality follows from the relation $\tilde\rho_Z\leq \rho_Z$, the third equality is by definition of $\rho_Z$, while the definition of $\rho_Z$ together with the relation $Y_i\sim_2\alpha_i Z$ with $i=1,2$ yields the last equality. 
Taking the infimum over $Y_i\succeq_2 X_i$ such that $Y_i\sim_2\alpha_i Z$ with $\alpha_i\in[0,1]$ and $i=1,2$ we get the thesis. \smallskip

`If': We proceed to demonstrate the converse implication. 
Given a collection of SSD-consistent convex risk measures $(\rho_{\gamma})_{\gamma\in\Gamma}$ that share the same value at $0$ for all $\gamma\in\Gamma$, it is established that their pointwise minimum forms a star-shaped risk measure (refer to Lemma~7 in \cite{LRZ23}). 
Our focus now is to establish that $\rho(X):=\min_{\gamma\in\Gamma}\rho_{\gamma}(X)$ upholds SSD-consistency.
To establish this, let us assume $X\succeq_2 Y$. 
By the SSD-consistency of the family $(\rho_{\gamma})_{\gamma\in\Gamma}$, it follows that $\rho_{\gamma}(X)\geq\rho_{\gamma}(Y)$ for all $\gamma\in\Gamma$. 
This yields the result: $$\rho(X)=\min_{\gamma\in\Gamma}\rho_{\gamma}(X)\geq\min_{\gamma\in\Gamma}\rho_{\gamma}(Y)=\rho(Y).$$ \smallskip

\begin{center}
\textit{The positively homogeneous case}
\end{center}
To prove the statement regarding positively homogeneous risk measures, we consider the family of functionals $(\rho_Z)_{Z\in\Gamma}$ defined as:
\begin{equation*}
      \rho_Z(X)=
    \begin{cases}
        \alpha \rho(Z) \ &\mbox{ if } \exists \alpha\in[0,+\infty) \mbox{ s.t. } X\sim_2\alpha Z, \\
        +\infty &\mbox{ otherwise.}
    \end{cases}
\end{equation*}
By following the same reasoning as above, it can be proved that $\rho_Z$ is well-defined for all $Z\in\Gamma$. 
In addition, $\rho_Z(X)\geq\rho(X)$ and $\rho_X(X)=\rho(X)$. 
Moreover, we observe that \mbox{$\tilde{\rho}_Z:=\inf\{\rho_Z(Y):Y\succeq_2 X\}$} inherits the properties of $\rho_Z$ and it is SSD-consistent and sublinear. 
Thus, the family $(\tilde\rho_{\gamma})_{\gamma\in\Gamma}$ with $\Gamma:=\mbox{Dom}(\rho)$ fulfills the second thesis of the theorem. 
The converse implication is straightforward by observing that the pointwise minimum of sublinear risk measures is positively homogeneous.
\end{proof}
\begin{corollary}
        $f:\mathcal{X}\to\R\cup\{+\infty\}$ is a CSD-consistent and star-shaped functional if and only if there exist a set of indexes $\Gamma$ and a family $(\tilde f_{\gamma})_{\gamma\in\Gamma}$ of CSD-consistent and convex functionals $\tilde f_{\gamma}:\mathcal{X}\to\R\cup\{+\infty\}$ with $\tilde f_{\gamma}(0)=f(0)$ for all $\gamma\in\Gamma$, such that
        \begin{equation}
            f(X)=\min_{\gamma\in\Gamma}\tilde f_{\gamma}(X), \ X\in\mathcal{X}.
            \label{eq:SSrapNM}
        \end{equation} 
        In addition, $f:\mathcal{X}\to\R\cup\{+\infty\}$ is a CSD-consistent and positively homogeneous functional if and only there exists a family of CSD-consistent and sublinear functionals $(\tilde f_{\gamma})_{\gamma\in\Gamma}$ such that $f$ can be represented as in Equation~\eqref{eq:SSrapNM}.
    \label{cor:nonmon}
\end{corollary}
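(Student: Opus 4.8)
The plan is to follow the architecture of the proof of Theorem~\ref{th:mainth} essentially verbatim, replacing second-order stochastic dominance by convex order throughout, i.e.\ $\succeq_2$ by $\succeq_c$ and $\sim_2$ by $\sim_c$. The one structural change is that, as recorded just before Section~\ref{sec:mr}, CSD-consistency does \emph{not} force monotonicity; hence the objects I build are genuine functionals rather than risk measures, and I cannot appeal to monotonicity at any point. This is harmless: a CSD-consistent functional is still law-invariant by Remark~\ref{rem:LI}, and law-invariance is the only ``global'' regularity the argument actually uses. Concretely, I would set $\Gamma:=\mbox{Dom}(f)$ and, for each $Z\in\Gamma$, define
\[
f_Z(X)=\begin{cases}\alpha f(Z)+(1-\alpha)f(0) & \text{if }\exists\,\alpha\in[0,1]\text{ s.t. }X\sim_c\alpha Z,\\ +\infty & \text{otherwise,}\end{cases}
\]
together with $\tilde f_Z(X):=\inf\{f_Z(Y):Y\succeq_c X\}$.

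First I would check that $f_Z$ is well-defined. Since $\sim_c$ means equality in law, $X\sim_c\alpha_1 Z\sim_c\alpha_2 Z$ forces $ES_\beta(\alpha_1 Z)=ES_\beta(\alpha_2 Z)$ for all $\beta$, so $(\alpha_1-\alpha_2)ES_\beta(Z)=0$; the very same computation as in Theorem~\ref{th:mainth} ($ES_\beta(Z)=0$ for all $\beta$ iff $Z\equiv0$) then gives uniqueness of $\alpha$ whenever $Z\not\equiv0$, the case $Z\equiv0$ being trivial. Next, star-shapedness gives $f(\alpha Z)\le\alpha f(Z)+(1-\alpha)f(0)$ and law-invariance gives $f(\alpha Z)=f(X)$ when $X\sim_c\alpha Z$, so $f_Z\ge f$ with equality $f_X(X)=f(X)$; transitivity of $\succeq_c$ makes $\tilde f_Z$ CSD-consistent, and the sandwich $f(X)\le\tilde f_Z(X)\le f_Z(X)$ (the lower bound from $f_Z(Y)\ge f(Y)\ge f(X)$ for $Y\succeq_c X$) yields $f=\min_{Z\in\Gamma}\tilde f_Z$.

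The heart of the proof, and the step I expect to be the main obstacle, is the convexity of $\tilde f_Z$. Here I would first simplify: taking $Y=\alpha Z$ (legitimate because $\succeq_c$ depends only on laws) shows
\[
\tilde f_Z(X)=\inf\bigl\{\alpha f(Z)+(1-\alpha)f(0):\alpha\in[0,1],\ \alpha Z\succeq_c X\bigr\}.
\]
Given $X_1,X_2$ with finite value and near-optimal $\alpha_1,\alpha_2$, I would set $\alpha:=\lambda\alpha_1+(1-\lambda)\alpha_2\in[0,1]$ and argue $\alpha Z\succeq_c\lambda X_1+(1-\lambda)X_2$. The $ES$-part is exactly as in Theorem~\ref{th:mainth}: positive homogeneity gives $ES_\beta(\alpha Z)=\lambda ES_\beta(\alpha_1 Z)+(1-\lambda)ES_\beta(\alpha_2 Z)$, and then $\alpha_i Z\succeq_c X_i$ together with subadditivity of $ES_\beta$ yields $ES_\beta(\alpha Z)\ge ES_\beta(\lambda X_1+(1-\lambda)X_2)$, i.e.\ $\succeq_2$. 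The genuinely new ingredient is the mean: convex order preserves expectations, so $\alpha_i Z\succeq_c X_i$ forces $\alpha_i\mathbb{E}[Z]=\mathbb{E}[X_i]$, whence $\alpha\,\mathbb{E}[Z]=\mathbb{E}[\lambda X_1+(1-\lambda)X_2]$, upgrading $\succeq_2$ to $\succeq_c$. Substituting back and letting the near-optimality gap vanish gives $\tilde f_Z(\lambda X_1+(1-\lambda)X_2)\le\lambda\tilde f_Z(X_1)+(1-\lambda)\tilde f_Z(X_2)$. The delicate point to treat carefully is the degenerate case $\mathbb{E}[Z]=0$, where the mean no longer pins down $\alpha$ and one must verify that the admissible $\alpha$'s still behave well under convex combination; the bookkeeping that keeps $\succeq_2$ and the mean constraint simultaneously consistent is where I expect the argument to demand the most care.

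Finally, the `if' direction and the positively homogeneous case are short. For `if', the pointwise minimum of CSD-consistent convex functionals sharing the common value $f(0)$ at $0$ is star-shaped by Lemma~7 in~\cite{LRZ23}, and it is CSD-consistent because $X\succeq_c Y$ gives $\tilde f_\gamma(X)\ge\tilde f_\gamma(Y)$ for every $\gamma$, hence the same for the minimum. For the positively homogeneous statement I would repeat the construction with $\alpha$ ranging over $[0,+\infty)$ and with the term $(1-\alpha)f(0)$ deleted, so that each $\tilde f_Z$ becomes sublinear; the converse is immediate since a pointwise minimum of sublinear functionals is positively homogeneous.
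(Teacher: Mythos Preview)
Your proposal is correct and follows essentially the same route as the paper, which explicitly states that the proof proceeds verbatim from Theorem~\ref{th:mainth} with $\succeq_2$ replaced by $\succeq_c$, using the characterization $X\succeq_c Y\iff ES_\beta(X)\ge ES_\beta(Y)$ for all $\beta$ and $\mathbb{E}(X)=\mathbb{E}(Y)$. Your identification of the mean constraint as the only new ingredient in the convexity check is exactly right; the degenerate case $\mathbb{E}[Z]=0$ you flag is in fact harmless, since then $\mathbb{E}[X_i]=\alpha_i\mathbb{E}[Z]=0$ forces $\mathbb{E}[\lambda X_1+(1-\lambda)X_2]=0=\mathbb{E}[\alpha Z]$ automatically.
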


\begin{proof}
    This can be proved similarly as in the proof of Theorem~\ref{th:mainth}, recalling that the condition $\mathbb{E}(g(X))\geq\mathbb{E}(g(Y))$ for any convex (not necessarily increasing) function $g:\R\to\R$ is equivalent to the relation $ES_{\beta}(X)\geq ES_{\beta}(Y)$ for all $\beta\in[0,1]$ and $\mathbb{E}(X)=\mathbb{E}(Y),$ see Lemma 3 in~\cite{BKMS21}. 
    Define:
    \begin{equation*}
        f_Z(X):=
        \begin{cases}
            \alpha f(Z)+(1-\alpha)f(0) \ &\mbox{ if } \exists \alpha\in[0,1] \mbox{ s.t. } X\sim_c \alpha Z, \\
            +\infty &\mbox{ otherwise},
         \end{cases}
    \end{equation*}
     for any $Z\in\Gamma:=\mbox{Dom}(f)$. 
     Now, the proof follows verbatim from the proof of Theorem~\ref{th:mainth} considering $\tilde f_Z(X):=\inf\{f_Z(Y):Y\succeq_c X\}, \ X\in\mathcal{X}$.
\end{proof}

In the following proposition, we show that by requiring cash-additivity, we can obtain a result akin to Theorem 4 in \cite{CCMTW22}, but within the broader space $\mathcal{X}$. 
\begin{proposition}
A functional $\rho:\mathcal{X}\to\R\cup\{+\infty\}$ is an SSD-consistent, star-shaped and cash-additive risk measure if and only if there exist a set of indexes $\Gamma$ and a family $(\tilde\rho_{\gamma})_{\gamma\in\Gamma}$ of SSD-consistent, convex and cash-additive risk measures $\tilde\rho_{\gamma}:\mathcal{X}\to\R\cup\{+\infty\}$ such that $\tilde\rho_{\gamma}(0)=\rho(0)$ for any $\gamma\in\Gamma$ and
$$\rho(X)=\min_{\gamma\in\Gamma}\tilde\rho_{\gamma}(X), \ X\in\mathcal{X}.$$
Furthermore, $\rho$ is SSD-consistent, positively homogeneous and cash-additive if and only if each element of the family $(\tilde\rho_{\gamma})_{\gamma\in\Gamma}$ is SSD-consistent, sublinear and cash-additive.
\label{cor:CA}
\end{proposition}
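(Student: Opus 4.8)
The plan is to bootstrap from Theorem~\ref{th:mainth} rather than rebuild the construction from scratch, and to inject cash-additivity by means of an inf-convolution with the constants that leaves convexity and SSD-consistency intact. For the `if' direction, suppose $\rho=\min_{\gamma\in\Gamma}\tilde\rho_\gamma$ with each $\tilde\rho_\gamma$ an SSD-consistent, convex, cash-additive risk measure satisfying $\tilde\rho_\gamma(0)=\rho(0)$. Star-shapedness is then Lemma~7 in~\cite{LRZ23} (the pointwise minimum of convex risk measures agreeing at $0$), SSD-consistency follows verbatim from the `if' part of Theorem~\ref{th:mainth}, and cash-additivity is immediate from $\min_{\gamma}\{\tilde\rho_\gamma(X)+m\}=\min_{\gamma}\tilde\rho_\gamma(X)+m$. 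So the content is in the `only if' direction.

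For `only if', I would first apply Theorem~\ref{th:mainth} to the SSD-consistent, star-shaped $\rho$ to obtain a family $(\tilde\rho_\gamma)_{\gamma\in\Gamma}$ of SSD-consistent convex risk measures with $\tilde\rho_\gamma(0)=\rho(0)$ and $\rho=\min_\gamma\tilde\rho_\gamma$; these need not be cash-additive. I would then replace each $\tilde\rho_\gamma$ by its cash-additive hull
\[
\hat\rho_\gamma(X):=\inf_{c\in\R}\{\tilde\rho_\gamma(X-c)+c\},\qquad X\in\mathcal{X}.
\]
The properties to check are routine but must be assembled carefully. Convexity of $\hat\rho_\gamma$ is preserved because it is the inf-convolution of the convex $\tilde\rho_\gamma$ with the (convex) functional that equals $c$ on the constant $c$ and $+\infty$ elsewhere. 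Cash-additivity holds by the substitution $c\mapsto c+a$: $\hat\rho_\gamma(X+a)=\hat\rho_\gamma(X)+a$. SSD-consistency is inherited because $X_1\succeq_2 X_2$ implies $X_1-c\succeq_2 X_2-c$ for every $c$, so $\tilde\rho_\gamma(X_1-c)\ge\tilde\rho_\gamma(X_2-c)$ (and hence $\hat\rho_\gamma$ is monotone and law-invariant by Remark~\ref{rem:LI}). The key step, which genuinely uses cash-additivity of $\rho$, is the two-sided sandwich: taking $c=0$ gives $\hat\rho_\gamma\le\tilde\rho_\gamma$, whereas for every $c$ one has $\tilde\rho_\gamma(X-c)+c\ge\rho(X-c)+c=\rho(X)$ (the inequality from $\tilde\rho_\gamma\ge\rho$, the equality from cash-additivity of $\rho$), so that $\hat\rho_\gamma\ge\rho$ after taking the infimum. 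Consequently $\rho\le\hat\rho_\gamma\le\tilde\rho_\gamma$, which yields simultaneously $\min_\gamma\hat\rho_\gamma=\rho$, the normalization $\hat\rho_\gamma(0)=\rho(0)$, and $\hat\rho_\gamma>-\infty$ (it is bounded below by the $\R\cup\{+\infty\}$-valued $\rho$), so each $\hat\rho_\gamma$ is a proper cash-additive convex SSD-consistent risk measure.

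The positively homogeneous case is handled identically, starting from the sublinear family supplied by the second part of Theorem~\ref{th:mainth}: the same $\hat\rho_\gamma$ additionally preserves positive homogeneity, since for $\lambda>0$ the change of variable $c=\lambda c'$ and $\tilde\rho_\gamma(\lambda\,\cdot)=\lambda\tilde\rho_\gamma(\cdot)$ give $\hat\rho_\gamma(\lambda X)=\lambda\hat\rho_\gamma(X)$, so each $\hat\rho_\gamma$ is sublinear and cash-additive. I expect the main obstacle to be purely conceptual rather than computational: recognizing that the cash-additive envelope is the right device, and noticing that the domination $\hat\rho_\gamma\ge\rho$ is exactly where cash-additivity of $\rho$ must be invoked (so that the envelope never drops below $\rho$). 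Everything else reduces to the standard fact that inf-convolution of convex/sublinear functionals with a cash-additive term is again convex/sublinear and cash-additive, avoiding any re-examination of the delicate well-definedness and convexity arguments carried out in Theorem~\ref{th:mainth}.
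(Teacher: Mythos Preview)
Your argument is correct and takes a genuinely different route from the paper. The paper rebuilds the construction of Theorem~\ref{th:mainth} from scratch, indexing by $Z\in\mbox{Dom}(\rho)$ and defining $\rho_Z(X)=\alpha\rho(Z)+c$ whenever $X\sim_2\alpha Z+c$ (with a separate clause for constants), then passing to $\tilde\rho_Z(X)=\inf\{\rho_Z(Y):Y\succeq_2 X\}$; well-definedness, cash-additivity, and convexity are re-verified by hand for this enlarged parametrization. Your approach instead treats Theorem~\ref{th:mainth} as a black box and applies the cash-additive hull $\hat\rho_\gamma(X)=\inf_{c\in\R}\{\tilde\rho_\gamma(X-c)+c\}$, i.e., inf-convolution with the indicator of the constants. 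This is cleaner and more modular: the sandwich $\rho\le\hat\rho_\gamma\le\tilde\rho_\gamma$ (the lower bound being exactly where cash-additivity of $\rho$ enters) immediately gives $\min_\gamma\hat\rho_\gamma=\rho$, attainment of the minimum, the normalization $\hat\rho_\gamma(0)=\rho(0)$, and properness, while convexity, SSD-consistency, cash-additivity, and positive homogeneity are inherited by standard inf-convolution arguments. The trade-off is that the paper's explicit formula for $\tilde\rho_Z$ is reused later in Section~\ref{sec:kr} to prove $\sigma(\mathcal{X},\mathcal{X}^*)$-lower semicontinuity for the Kusuoka-type representation; your $\hat\rho_\gamma$ would require a separate argument there. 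For the purposes of Proposition~\ref{cor:CA} alone, however, your proof is complete and arguably more transparent.
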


\begin{proof}
        We establish the implication that if $\rho$ is SSD-consistent, star-shaped and cash-additive, then there exists a family of SSD-consistent, convex and cash-additive risk measures whose pointwise minimum is $\rho$. 
        The converse implication follows analogously to the proof of Theorem~\ref{th:mainth}. 
        We assume $\rho(0)=0$ for brevity.

        Let us define for any $Z\in \Gamma:=\mbox{Dom}(\rho)$ the functional:
                    \begin{equation*}
                        \rho_Z(X):=\begin{cases}
                            \alpha \rho(Z) + c \ \ &\mbox{ if } X \mbox{ is non-constant and there exist }\\&\quad \alpha\in(0,1], c\in\R \mbox{ s.t. } X\sim_2\alpha Z + c, \\
                            c \ \ &\mbox{ if there exists } c\in\R \mbox{ s.t. } X\sim_2c, \\
                            +\infty &\mbox{ otherwise.}
                        \end{cases}
                    \end{equation*}
               Let us observe that if $Z$ is constant, then the initial case in the definition of $\rho_Z$ cannot occur. 
               Indeed, when $Z$ is constant, $ES_{\beta}(X)=ES_{\beta}(Z+c)$ holds for all $\beta\in[0,1]$, and utilizing the cash-additivity of $ES_{\beta}$, we deduce $ES_{\beta}(X-Z-c)=0$ for all $\beta\in[0,1]$. 
               By invoking the first part of the proof of Theorem~\ref{th:mainth}, we conclude that $X=Z+c$ a.s., implying that $X$ must be constant.
               Consequently, either $X$ is constant and $\rho_Z(X)=c$, or $\rho_Z(X)=+\infty$. 
               Moreover, $\rho_Z$ is well-defined. 
               Indeed, if $Z$ is constant and there exists $c\in\R$ such that $X\sim_2c$, then $X=c$ a.s.\ and $c$ is unique. 
               If $Z$ is not constant, suppose by contradiction there exist $\alpha_1,\alpha_2\in[0,1]$ and $c_1,c_2\in\R$ such that $\alpha_1\neq\alpha_2$ and $c_1\neq c_2$. 
               Then $ES_{\beta}(\alpha_1Z+c_1)=ES_{\beta}(\alpha_2Z+c_2)$, for any $\beta\in[0,1]$. 
               By cash-additivity and positive homogeneity of $ES_{\beta}$ we have $ES_{\beta}(Z)=\frac{c_1-c_2}{\alpha_1-\alpha_2}$ for any $\beta\in[0,1]$. 
               Hence, the previous considerations yield $Z=\frac{c_1-c_2}{\alpha_1-\alpha_2}$ a.s., i.e., $Z$ is constant, which is a contradiction. 
               A similar conclusion holds if we suppose by contradiction $\alpha_1=\alpha_2$ and $c_1\neq c_2$ or $\alpha_1\neq\alpha_2$ and $c_1=c_2$. 
               Furthermore, the cash-additivity of $\rho_Z$ becomes evident. 
               In particular, for any $m\in\R$, if $\rho_Z(X)<+\infty$, then $\rho_Z(X+m)=\alpha \rho(Z) + c +m =\rho_Z(X)+m$. 
               This relation also holds when $\rho_Z(X)=+\infty$.
               Now, we proceed to demonstrate that $\rho_Z(X)\geq \rho(X)$ for all $X\in\mathcal{X}$, and $\rho_X(X)=\rho(X)$ for $X\in \Gamma$. 
               Let us consider a non-constant $X\sim_2\alpha Z + c$ for some $\alpha\in(0,1]$ and $c\in\mathbb{R}$ (the case where $X=c$ is clear due to the cash-additivity of $\rho$). 
               In this setting, $\rho_Z(X)=\alpha \rho(Z) + c = \alpha \rho(Z+c/\alpha)\geq \rho(\alpha Z +c) = \rho(X)$, utilizing the cash-additivity and star-shapedness of $\rho$. 
               Moreover, for $X\in \Gamma$, we can set $Z=X$, yielding $\rho_X(X)=\rho(X)$. 
               Thus, for any $X\in\mathcal{X}$, we have $\rho(X)=\ds\min_{Z\in\Gamma}\rho_{Z}(X)$.
               We observe that $\rho_Z$ is law-invariant for any $Z\in\Gamma$. 
               To see this, let us consider $X\sim X'$, thus $X\sim_2\alpha Z +c$ if and only if $X'\sim_2\alpha Z +c$, by law-invariance of $ES_{\beta}$, which implies $\rho_Z(X)=\rho_Z(X')$.
                
               We can build a family of SSD-consistent, star-shaped and cash-additive risk measures $(\tilde\rho_Z)_{Z\in\Gamma}$ by introducing $\tilde{\rho}_Z(X):=\inf\{\rho_Z(Y):Y\succeq_2 X\}$.
               In this case, the property of cash-additivity remains intact, as shown by
               $$\tilde{\rho}_Z(X+m)=\inf\{\rho_Z(Y):Y\succeq_2 X+m\}=\inf\{\rho_Z(W+m):W\succeq_2 X\}=\tilde{\rho}_Z(X)+m,$$ where the last equality leverages the cash-additivity of $\rho_Z$ and the definition of $\tilde{\rho}_Z$. 
               Additionally, similarly to Theorem~\ref{th:mainth}, we establish $\rho\leq{\tilde\rho}_Z\leq \rho_Z$, and also convexity of $\tilde{\rho}_Z$ can be checked similarly. 
               As a result, the thesis is obtained.
               The thesis concerning positively homogeneous risk measures is obtained by considering the family of sublinear functionals $(\rho_Z)_{Z\in\Gamma}$ defined by:
                   \begin{equation*}
                        \rho_Z(X):=\begin{cases}
                            \alpha\rho(Z) + c \ \ &\mbox{ if } X \mbox{ is non-constant and there exist }\\&\quad \alpha\in(0,+\infty), c\in\R \mbox{ s.t. } X\sim_2\alpha Z + c, \\
                            c \ \ &\mbox{ if there exists } c\in\R \mbox{ s.t. } X\sim_2c, \\
                            +\infty &\mbox{ otherwise,}
                        \end{cases}
                    \end{equation*}
                    and the corresponding family of SSD-consistent functionals $(\tilde\rho_Z)_{Z\in\Gamma}$ given by: 
                    $$\tilde\rho_Z(X):=\inf\{\rho_Z(Y): \ Y\succeq_2 X\}.$$
\end{proof}    
The following corollary extends our finding to CSD-consistent and star-shaped functionals, which are not necessarily monotone.
        \begin{corollary}
        $f:\mathcal{X}\to\R\cup\{+\infty\}$ is a CSD-consistent, star-shaped and cash-additive functional if and only if there exist a set of indexes $\Gamma$ and a family $(f_{\gamma})_{\gamma\in\Gamma}$ of CSD-consistent, convex and cash-additive functionals $f_{\gamma}:\mathcal{X}\to\R\cup\{+\infty\}$ with $f_{\gamma}(0)=f(0)$ for all $\gamma\in\Gamma$, such that $$f(X)=\min_{\gamma\in\Gamma}f_{\gamma}(X), \ X\in\mathcal{X}.$$ 
      Furthermore, $f$ is a CSD-consistent, positively homogeneous and cash-additive functional if and only if each element of the family $(f_{\gamma})_{\gamma\in\Gamma}$ is CSD-consistent, sublinear and cash-additive.
\end{corollary}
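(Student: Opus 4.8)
The plan is to mirror, essentially verbatim, the proof of Proposition~\ref{cor:CA}, replacing the second-order relations $\succeq_2,\sim_2$ everywhere by their convex-order counterparts $\succeq_c,\sim_c$ and SSD-consistency by CSD-consistency; this is exactly the adaptation by which Corollary~\ref{cor:nonmon} was obtained from Theorem~\ref{th:mainth}. As there, I would first reduce to the normalized case $f(0)=0$: setting $g:=f-f(0)$ preserves star-shapedness, cash-additivity and CSD-consistency while giving $g(0)=0$, so no generality is lost. I would then establish only the nontrivial implication, the converse being handled as in the `if' part of Theorem~\ref{th:mainth}: a pointwise minimum of CSD-consistent, convex, cash-additive functionals sharing the same value at $0$ is star-shaped by Lemma~7 in~\cite{LRZ23}, is cash-additive, and is CSD-consistent (from $X\succeq_c Y$ one gets $f_\gamma(X)\geq f_\gamma(Y)$ for all $\gamma$, hence the same for the minimum).

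For the forward implication I would, for each $Z\in\Gamma:=\mbox{Dom}(f)$, define
\begin{equation*}
f_Z(X):=\begin{cases}
\alpha f(Z)+c & \mbox{if } X \mbox{ is non-constant and } \exists\,\alpha\in(0,1],\,c\in\R \mbox{ s.t. } X\sim_c\alpha Z+c,\\
c & \mbox{if } \exists\,c\in\R \mbox{ s.t. } X\sim_c c,\\
+\infty & \mbox{otherwise,}
\end{cases}
\end{equation*}
and then set $\tilde{f}_Z(X):=\inf\{f_Z(Y):Y\succeq_c X\}$. The verifications carry over from Proposition~\ref{cor:CA} with only cosmetic changes: well-definedness (uniqueness of the pair $(\alpha,c)$) reduces, via positive homogeneity and cash-additivity of $ES_\beta$, to the fact established in the first part of the proof of Theorem~\ref{th:mainth} that $ES_\beta(W)=0$ for all $\beta$ forces $W\equiv0$ (here applied to $W=Z$ minus a constant), and this applies because $\sim_c$ entails $ES_\beta(X)=ES_\beta(\alpha Z+c)$ for all $\beta$; cash-additivity of $f_Z$, and hence of $\tilde{f}_Z$, is immediate; and the chain $f\le\tilde{f}_Z\le f_Z$ together with $f_X(X)=f(X)$ follows from the cash-additivity and star-shapedness of $f$, giving $f=\min_{Z\in\Gamma}\tilde{f}_Z$. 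Law-invariance of $f_Z$ again comes from law-invariance of $ES_\beta$.

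The one step requiring genuine attention is convexity of $\tilde{f}_Z$, because convex-order dominance demands equality of means on top of the $ES_\beta$-inequalities. Reproducing the convexity argument of Theorem~\ref{th:mainth}, I would pick optimal $Y_i\succeq_c X_i$ with $Y_i\sim_c\alpha_i Z+c_i$ and combine sublinearity and comonotonic additivity of $ES_\beta$ (the latter because $\alpha_1 Z$ and $\alpha_2 Z$ are comonotonic) to get $ES_\beta(\lambda X_1+(1-\lambda)X_2)\le ES_\beta(\lambda\alpha_1 Z+(1-\lambda)\alpha_2 Z+\lambda c_1+(1-\lambda)c_2)$ for every $\beta$. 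The extra ingredient, absent in the SSD case, is that from $\mathbb{E}(Y_i)=\mathbb{E}(X_i)$ and $\mathbb{E}(Y_i)=\mathbb{E}(\alpha_i Z+c_i)$ the linearity of expectation forces the two sides to have equal mean, thereby upgrading the $ES_\beta$-relation to $\succeq_c$. This mean-bookkeeping is precisely where the CSD argument diverges from—and is slightly more delicate than—the SSD one, but it is routine once the mean constraint is carried along throughout the estimates.

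Finally, the positively homogeneous statement follows by the same construction with $\alpha$ ranging over $(0,+\infty)$ and the affine term handled as in the corresponding part of Proposition~\ref{cor:CA}, after which one observes that a pointwise minimum of CSD-consistent, sublinear, cash-additive functionals is again CSD-consistent, positively homogeneous and cash-additive. I expect no further obstacles beyond the mean-equality tracking described above, since monotonicity—which is not implied by CSD-consistency—is never actually invoked in the representation: only the CSD-consistency, sublinearity, comonotonic additivity and cash-additivity of $ES_\beta$ are used.
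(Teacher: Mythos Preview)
Your proposal is correct and follows essentially the same approach as the paper, whose proof is the one-liner ``the proof follows verbatim from the proofs of Corollary~\ref{cor:nonmon} and Proposition~\ref{cor:CA}.'' You have spelled out precisely what that amounts to, including the one genuinely new bookkeeping step---carrying the mean-equality constraint through the convexity argument---which is exactly the point at which the CSD adaptation differs from the SSD case.
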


\begin{proof}
    The proof follows verbatim from the proofs of Corollary~\ref{cor:nonmon} and Proposition~\ref{cor:CA}.
\end{proof}

\section{Representations \`a la Kusuoka} \label{sec:kr}

In this section, we first present quantile-based representations for star-shaped functionals that satisfy SSD- or CSD-consistency. 
That is, we extend the representations established in \cite{K01,FR05,CMMM11} to our setting. 
We assume that $\mathcal{X}$ possesses a rearrangement-invariant structure, i.e., $\mathcal{X}$ is a solid lattice with a law-invariant and complete lattice norm (for further details on rearrangement invariant space cf.\ \cite{CR71}). 
This assumption is satisfied by commonly used spaces such as Orlicz spaces, Orlicz hearts and Marcinkiewicz spaces. 
As emphasized in Section~\ref{sec:prel}, within this framework, the concepts of $\sigma(\mathcal{X},\mathcal{X}^*)$-lower semicontinuity and the Fatou property are equivalent for any law-invariant convex functional $f:\mathcal{X}\to\R\cup\{+\infty\}$. 
Additionally, when $f$ is also monotonic, continuity from below is equivalent to the previous two properties (see Proposition~2.5 in \cite{BKMS21} for further details). 
We are ready to state the main result of this section.

\begin{proposition}
    Let $\mathcal{X}$ be a rearrangement-invariant space. 
    A functional \mbox{$f:\mathcal{X}\to\R\cup\{+\infty\}$} is CSD-consistent and star-shaped if and only if it can be represented as:
    \begin{equation}
    f(X)=\min_{\gamma\in\Gamma}\tilde f_{\gamma}(X)=\min_{\gamma\in\Gamma}\sup_{Y\in L^{\infty}}\left\{\int_0^1VaR_{\beta}(X)VaR_{\beta}(Y)d\beta-\tilde f_{\gamma}^*(Y)\right\}, \ \ X\in\mathcal{X},
            \label{eq:SDD_RS}
    \end{equation}
   where $(\tilde f_{\gamma})_{\gamma\in\Gamma}$ is the family of functionals provided in Corollary~\ref{cor:nonmon}, and $\tilde f_{\gamma}^{*}$ is given by: 
    \begin{equation*}
        \tilde f_{\gamma}^*(Y)=\sup_{X\in\mathcal{X}}\left\{\int_0^1VaR_{\beta}(X)VaR_{\beta}(Y)d\beta-\tilde f_{\gamma}(X)\right\}, \ \ Y\in\mathcal{X}^*,\gamma\in\Gamma.
    \end{equation*}
    In addition, if $\rho:\mathcal{X}\to\R\cup\{+\infty\}$ is an SSD-consistent and star-shaped risk measure, then the supremum in Equation~\eqref{eq:SDD_RS} can be taken over $L^{\infty}_+$, obtaining:
    $$\rho(X)=\min_{\gamma\in\Gamma}\tilde \rho_{\gamma}(X)=\min_{\gamma\in\Gamma}\sup_{Y\in L_+^{\infty}}\left\{\int_0^1VaR_{\beta}(X)VaR_{\beta}(Y)d\beta-\tilde \rho_{\gamma}^*(Y)\right\}, \ \ X\in\mathcal{X}.$$
    Here, $(\tilde \rho_{\gamma})_{\gamma\in\Gamma}$ is the family of risk measures provided in Theorem~\ref{th:mainth}, and 
     \begin{equation*}
        \tilde \rho_{\gamma}^*(Y)=\sup_{X\in\mathcal{X}}\left\{\int_0^1VaR_{\beta}(X)VaR_{\beta}(Y)d\beta-\tilde \rho_{\gamma}(X)\right\}, \ \ Y\in\mathcal{X}^*,\gamma\in\Gamma.
    \end{equation*}
 
    \noindent Finally, \mbox{$f:\mathcal{X}\to\R\cup\{+\infty\}$} is a CSD-consistent and positively homogeneous functional if and only if the following representation holds:
    $$f(X)=\min_{\gamma\in\Gamma}\tilde f_{\gamma}(X)=\min_{\gamma\in\Gamma}\sup_{Y\in \mathcal{M}^{\gamma}}\left\{\int_0^1VaR_{\beta}(X)VaR_{\beta}(Y)d\beta\right\}, \ \ X\in\mathcal{X},$$
    where $\Gamma$ is a set of indexes and $\mathcal{M}^{\gamma}\subseteq L^{\infty}$ for each $\gamma\in\Gamma$.  
    If $f$ is also monotone (thus, it is a risk measure), then the family of sets $(\mathcal{M}^{\gamma})_{\gamma\in\Gamma}$ can be chosen such that $\mathcal{M}^{\gamma}\subseteq L^{\infty}_+$.
    \label{prop:Kusoka}
\end{proposition}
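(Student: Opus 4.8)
The plan is to build the representation directly on top of the min-representations already secured in Corollary~\ref{cor:nonmon} and Theorem~\ref{th:mainth}, so that the entire task reduces to producing a Kusuoka-type (comonotone) dual representation for a \emph{single} convex, CSD-consistent functional. For the first claim I would start from $f=\min_{\gamma\in\Gamma}\tilde f_\gamma$ with each $\tilde f_\gamma$ CSD-consistent and convex, and then show that every such $\tilde f_\gamma$ satisfies
\[
\tilde f_\gamma(X)=\sup_{Y\in L^{\infty}}\left\{\int_0^1 VaR_\beta(X)VaR_\beta(Y)\,d\beta-\tilde f_\gamma^*(Y)\right\};
\]
taking the minimum over $\gamma$ then yields \eqref{eq:SDD_RS}. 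The converse (`if') direction is the easier half: each inner supremum is a supremum of maps $X\mapsto\int_0^1 VaR_\beta(X)VaR_\beta(Y)\,d\beta-c$, which are convex in $X$ and CSD-consistent (since $\beta\mapsto VaR_\beta(Y)$ is nondecreasing, such a map is a nonnegative mixture of Expected Shortfalls up to a mean term, hence consistent with the convex order); a supremum of convex CSD-consistent functionals is convex and CSD-consistent, and a minimum of such functionals sharing the value at $0$ is star-shaped and CSD-consistent, so the converse closes exactly as in Corollary~\ref{cor:nonmon}.

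For the core (`only if') direction the key steps, in order, are: (i) note that CSD-consistency forces law-invariance (Remark~\ref{rem:LI}) and that each $\tilde f_\gamma$ is proper, since $\tilde f_\gamma(0)=f(0)<+\infty$; (ii) secure lower semicontinuity of $\tilde f_\gamma$ so that the Fenchel--Moreau biconjugate recovers $\tilde f_\gamma$ itself --- in the rearrangement-invariant setting lower semicontinuity coincides with the Fatou property, so I would either verify the Fatou property directly for the inf-type functionals $\tilde f_\gamma(X)=\inf\{f_Z(Y):Y\succeq_c X\}$, or pass to their lower semicontinuous hulls and use the sandwich $\tilde f_\gamma\geq f$ to check that the minimum over $\gamma$ is unchanged; (iii) apply Fenchel--Moreau to obtain $\tilde f_\gamma(X)=\sup_{Y\in\mathcal{X}^*}\{\E[XY]-\tilde f_\gamma^{*}(Y)\}$ with the bilinear pairing; and (iv) perform the comonotone reduction, replacing $\E[XY]$ by $\int_0^1 VaR_\beta(X)VaR_\beta(Y)\,d\beta$. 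This last step is the heart of the Kusuoka-type result: by the Hardy--Littlewood rearrangement inequality one has $\E[XY']\leq\int_0^1 VaR_\beta(X)VaR_\beta(Y)\,d\beta$ for every $Y'\sim Y$, with equality along a comonotone rearrangement, while law-invariance of $\tilde f_\gamma$ transfers to $\tilde f_\gamma^{*}$ so that $\tilde f_\gamma^{*}(Y')=\tilde f_\gamma^{*}(Y)$ across the law-class of $Y$; taking the supremum first over the law-class converts the bilinear pairing into the quantile pairing and simultaneously identifies the bilinear conjugate with the quantile-pairing conjugate appearing in the statement.

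It then remains to restrict the dual domain. To pass from $\mathcal{X}^*$ to $L^{\infty}$ I would use truncation: for $Y\in\mathcal{X}^*$, approximate by $Y_n:=(Y\wedge n)\vee(-n)\in L^{\infty}$, so that the quantile pairing converges by monotone/dominated convergence while control of $\tilde f_\gamma^{*}(Y_n)$ forces the supremum over $L^{\infty}$ to match that over $\mathcal{X}^*$. When $f$ is in addition a risk measure (SSD-consistent, hence monotone), monotonicity of $\tilde f_\gamma$ makes $\tilde f_\gamma^{*}(Y)=+\infty$ whenever $Y$ fails to be nonnegative, so the effective dual elements lie in $L^{\infty}_+$, giving the sharpened representation. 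Finally, in the positively homogeneous case each $\tilde f_\gamma$ is sublinear, whence $\tilde f_\gamma^{*}$ takes only the values $0$ and $+\infty$; setting $\mathcal{M}^{\gamma}:=\{Y\in L^{\infty}:\tilde f_\gamma^{*}(Y)=0\}$ collapses the Fenchel representation to $\sup_{Y\in\mathcal{M}^{\gamma}}\int_0^1 VaR_\beta(X)VaR_\beta(Y)\,d\beta$, with $\mathcal{M}^{\gamma}\subseteq L^{\infty}_+$ under monotonicity, and a final minimum over $\gamma$ completes every assertion.

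The step I expect to be the main obstacle is securing lower semicontinuity of the constructed $\tilde f_\gamma$ (step (ii)) in tandem with the comonotone reduction and the domain restriction: the biconjugate recovers only a convex lower semicontinuous object, so any gap between $\tilde f_\gamma$ and its lower semicontinuous hull must be absorbed through the sandwich $f\leq\tilde f_\gamma$, and the rearrangement inequality must be applied uniformly over each law-class while respecting the truncation that brings $Y$ into $L^{\infty}$ (and $L^{\infty}_+$ under monotonicity).
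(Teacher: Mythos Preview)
Your high-level strategy coincides with the paper's: reduce to the min-representation from Corollary~\ref{cor:nonmon}/Theorem~\ref{th:mainth} and then obtain a Kusuoka representation for each convex piece, with lower semicontinuity of $\tilde f_\gamma$ as the crux. The paper does not carry out your steps (iii)--(iv) and the $L^\infty$-truncation separately; it invokes Proposition~5.1 of \cite{BKMS21}, which already delivers the quantile-pairing dual over $L^\infty$ (and $L^\infty_+$ in the monotone case) for any proper, convex, law-invariant, $\sigma(\mathcal{X},\mathcal{X}^*)$-lsc functional on a rearrangement-invariant space. Consequently the entire proof in the paper is devoted to the obstacle you correctly flag: establishing the Fatou property of the \emph{specific} $\tilde f_Z$ from the construction. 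This is done by rewriting $\tilde f_Z(X)=\inf_{\alpha\in A_X}\alpha f(Z)$, showing the infimum is attained via compactness of $[0,1]$, using the Lebesgue property of $ES_\beta$ on $L^1$, and then passing to limits along dominated a.s.\ convergent sequences (continuity from below in the SSD case).

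One caution about your alternative in step~(ii): replacing $\tilde f_\gamma$ by its lsc hull does not prove the proposition as stated, because the statement explicitly pins down $(\tilde f_\gamma)_{\gamma\in\Gamma}$ to be the family from Corollary~\ref{cor:nonmon}. Moreover the sandwich you propose only gives $\overline{\tilde f_\gamma}\le\tilde f_\gamma$; to conclude $\min_\gamma\overline{\tilde f_\gamma}\ge f$ you would need $\overline{\tilde f_\gamma}\ge f$, which requires $f$ itself to be lsc---an assumption not present here. So the direct Fatou verification (your first option, and the paper's route) is the one that actually closes the argument.
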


\begin{proof} The `if' part is straightforward. 
Let us prove the `only if' implication. 
\begin{center}
    \textit{The CSD-consistent and star-shaped case}
\end{center}
According to Proposition~5.1 in~\cite{BKMS21}, if $\tilde f_{\gamma}$ is law-invariant, convex and $\sigma(\mathcal{X},\mathcal{X}^*)$-lower semicontinuous, it admits the representation: $$\tilde f_{\gamma}(X)=\sup_{Y\in L^{\infty}}\left\{\int_0^1VaR_{\beta}(X)VaR_{\beta}(Y)d\beta-\tilde f_{\gamma}^*(Y)\right\}, \ \ X\in\mathcal{X},$$ valid for any $\gamma\in\Gamma$. 
Here, $\tilde f_{\gamma}^*(Y)$ is given by:
$$\tilde f_{\gamma}^*(Y)=\sup_{X\in\mathcal{X}}\left\{\int_0^1VaR_{\beta}(X)VaR_{\beta}(Y)d\beta-\tilde f_{\gamma}(X)\right\}, \ \ Y\in\mathcal{X}^*.$$
Thus, it suffices to establish that $\tilde f_{\gamma}$ is $\sigma(\mathcal{X},\mathcal{X}^*)$-lower semicontinuous for any $\gamma\in\Gamma$, given that CSD-consistency and convexity of $\tilde f_{\gamma}$ have already been shown in Corollary~\ref{cor:nonmon}, recalling that CSD-consistency implies law-invariance (see Remark~\ref{rem:LI}). \smallskip

\textit{Equivalent expression for $\tilde f_Z$:} Referring back to Corollary~\ref{cor:nonmon} and considering the explicit form of the law-invariant convex functional $\tilde f_{\gamma}$, where $\gamma:=Z\in \mbox{Dom}(f)$, involved in the representation of $f$ (assuming for brevity that $f(0)=0$), we can express $\tilde f_Z$ as follows:
\begin{equation*}
   \tilde f_Z(X)=
    \begin{cases}
        \ds\inf_{\alpha\in{A}_X}\{\alpha f(Z)\} \ &\mbox{ if } A_{X}:=\{\alpha\in[0,1]: \exists Y\succeq_c X, Y\sim_c\alpha Z\}\neq\emptyset, \\
        +\infty &\mbox{ otherwise.}
    \end{cases}
\end{equation*}
We also recall the definition of $(f_Z)_{Z\in\Gamma}$:
\begin{equation*}
   f_Z(X)=
    \begin{cases}
        \alpha f(Z) \ &\mbox{ if } \exists\alpha\in[0,1] \mbox{ s.t. } X\sim_c \alpha Z, \\
        +\infty &\mbox{ otherwise.}
    \end{cases}
\end{equation*}
Note that for any fixed $X\in\mathcal{X}$ the infimum in the definition of $\tilde f_Z(X)$ is always attained as soon as \mbox{$A_{X}\neq \emptyset$.} 
Indeed, by definition of the infimum, there exist a sequence of real numbers $(\alpha_j)_{j\in\mathbb{N}}\subseteq[0,1]$ and a sequence of random variables $(Y_j)_{j\in\mathbb{N}}$ verifying $Y_j\sim_c \alpha_j Z$ and $Y_j\succeq_c X$ for all $j\in\mathbb{N}$ such that \mbox{$\ds\lim_{j\to\infty}\alpha_j f(Z)=\ds\inf_{\alpha\in A_X}\{\alpha f(Z)$\}.} We assume $f(Z)\neq0$, otherwise the statement is trivial, once observed that any $\alpha\in A_X$ is a minimizer in this case. 
In particular, we have that $$\ds\lim_{j\to\infty}\alpha_j=\frac{\ds\inf_{\alpha\in A_X}\{ \alpha f(Z)\}}{f(Z)}\in[0,1].$$ 
We want to prove that $\bar\alpha:=\ds\lim_{j\to\infty}\alpha_j$ is actually a minimizer. 
Let us consider the random variable $\bar Y:=\bar\alpha Z$. 
We need to check that $\bar Y\succeq_c X$ and $\bar Y \sim_c \bar\alpha Z$. 
The second condition is true by construction while the first condition follows from positive homogeneity of Expected Shortfall: $ES_{\beta}(Y_j)=\alpha_j ES_{\beta}(Z)\geq ES_{\beta}(X)$ for all $j\in\mathbb{N}$, and letting $j\to\infty$ we obtain $ES_{\beta}(\bar Y)=ES_{\beta}(\bar\alpha Z)=\ds\lim_{j\to\infty}\alpha_jES_{\beta}(Z)\geq ES_{\beta}(X)$ for all $\beta\in[0,1)$. 
In addition, it results that $$\ds\lim_{j\to\infty}\mathbb{E}(Y_j)=\ds\lim_{j\to\infty}\alpha_j\mathbb{E}(Z)=\mathbb{E}(\bar\alpha Z)=\mathbb{E}(Y)=\mathbb{E}(X),$$ thus the thesis follows. \smallskip

\textit{Lebesgue property of $ES_{\beta}$:} We claim that $ES_{\beta}$ possesses the Lebesgue property for any $\beta\in[0,1)$. 
Based on Proposition~2.35 in~\cite{PR07}, for any $X,Y\in L^1$ it follows that: $$|ES_{\beta}(X)-ES_{\beta}(Y)|\leq\frac{1}{1-\beta}\|X-Y\|_1.$$ 
Given the hypotheses, $\ds\sup_{n\in\mathbb{N}}|X_n|\in\mathcal{X}\subseteq L^1$, $X\in\mathcal{X}\subseteq L^1$, and $X_n\to X$ a.s., the dominated convergence theorem implies $X_n\to X$ in $L^1$. 
Consequently,
$$|ES_{\beta}(X_n)-ES_{\beta}(X)|\leq\frac{1}{1-\beta}\|X_n-X\|_1\to 0,$$
yielding $ES_{\beta}(X_n)\to ES_{\beta}(X)$ for any $\beta\in[0,1)$. \smallskip

\textit{$\sigma(\mathcal{X},\mathcal{X}^*)$-lower semicontinuity of $\tilde f_Z$:} As $\mathcal{X}$ is a rearrangement-invariant space, demonstrating the Fatou property of $\tilde f_{\gamma}$ for any $\gamma\in\Gamma$ is sufficient. 
This property requires showing that for a sequence $(X_n)_{n\in\mathbb{N}}\subseteq\mathcal{X}$ and $X\in\mathcal{X}$ with $X_n\to X$ a.s.\ and $\ds\sup_{n\in\mathbb{N}}|X_n|\in\mathcal{X}$, it holds that $\ds\liminf_{n\in\mathbb{N}}\tilde f_{\gamma}(X_n)\geq \tilde f_{\gamma}(X)$. Based on the preceding argument and assuming that $\tilde f_{Z}(X)\neq+\infty$, it is evident that  $\ds\liminf_{n\in\mathbb{N}}\tilde f(X_n)\neq +\infty$ only if there are infinitely many $n\in\mathbb{N}$ for which there exists $Y_n\succeq_c X_n$ with $Y_n\sim_c \alpha_n Z$, $\alpha_n\in[0,1]$ and $\tilde f(X_n)=f_Z(Y_n)=\alpha_nf(Z)$. 
This allows focusing on such sequences.
We need to verify that $\ds\liminf_{n\to\infty}\tilde f_{Z}(X_n)\geq \tilde f_{Z}(X)$. 
It holds that:
$$\liminf_{n\to\infty}\tilde f_{Z}(X_n)=\liminf_{n\to\infty}f_Z(Y_n)=\liminf_{n\to\infty}\alpha_n f(Z)=\tilde\alpha f(Z)=f_Z(\tilde Y),$$
where $\tilde Y:=\tilde\alpha Z$ with $\tilde\alpha:=\ds\liminf_{n\to\infty}\alpha_n$.
The last equality results from observing that $\tilde\alpha=\ds\liminf_{n\to\infty}\alpha_n\in[0,1]$ due to the fact that $\alpha_n\in[0,1]$ for all $n\in\mathbb{N}$, and in accordance with the definition of $f_Z$.
If we show that $\tilde Y\succeq_c X$ (thus,  $\tilde\alpha\in A_X$), then we can conclude $f_Z(\tilde Y)\geq\ds\inf_{\alpha\in A_X}{\alpha f(Z)}=\tilde f_Z(X)$, leading to  $\ds\liminf_{n\to\infty}\tilde f_Z(X_n)\geq \tilde f_Z(X)$. To establish this, we need to verify $ES_{\beta}(\tilde Y)\geq ES_{\beta}(X)$ and $\mathbb{E}(\tilde Y)=\mathbb{E}(X)$. We have:
\begin{align*}
&ES_{\beta}(\tilde Y)=\tilde\alpha ES_{\beta}(Z)=\liminf_{n\to\infty}\alpha_n ES_{\beta}(Z) \\
&=\liminf_{n\to\infty}ES_{\beta}(Y_n)\geq \liminf_{n\to\infty}ES_{\beta}(X_n)=ES_{\beta}(X),
\end{align*}
where the inequality is due to $Y_n\succeq_c X_n$ and the last equality holds by the Lebesgue property of $ES_{\beta}$. 
Thus, $ES_{\beta}(\tilde Y)\geq ES_{\beta}(X)$ for any $\beta\in[0,1)$. 
Analogously, by the dominated convergence theorem, we have $$\mathbb{E}(\tilde Y)=\liminf_{n\to\infty}\mathbb{E}(Y_n)=\liminf_{n\to\infty}\mathbb{E}(X_n)=\mathbb{E}(X).$$
By a similar argument involving contradiction, it can be established that when $\tilde f_{\gamma}(X)=+\infty$, $\ds\liminf_{n\to\infty}\tilde f(X_n)=+\infty$,  hence $\tilde f_Z$ verifies the Fatou property, yielding the $\sigma(\mathcal{X},\mathcal{X}^*)$-lower semicontinuity.

\begin{center}
    \textit{The SSD-consistent and star-shaped case}
\end{center}

Let us consider an SSD-consistent and star-shaped risk measure $\rho:\mathcal{X}\to\R\cup\{+\infty\}$. 
For brevity we suppose $\rho(0)=0$. 
Our goal is to establish the $\sigma(\mathcal{X},\mathcal{X}^*)$-lower semicontinuity of $\tilde\rho_\gamma$ for any $\gamma\in \Gamma$. 
Importantly, the SSD-consistency of $\rho$ inherently includes both monotonicity and law-invariance. 
Consequently, we can apply Proposition~5.1 as presented in \cite{BKMS21} once more to derive the desired dual representation for $\tilde \rho_{\gamma}$. \smallskip

\textit{Equivalent expression for $\tilde \rho_Z$:} By Theorem~\ref{th:mainth}, we know that $\tilde\rho_{\gamma}$ takes the following form, with $\gamma:=Z\in \mbox{Dom}(f)$:
$$\tilde\rho_Z(X):=\inf\{\rho_Z(Y): Y\succeq_2 X\},$$ which can be written as:
\begin{equation*}
    \tilde\rho_Z(X)=
    \begin{cases}
        \ds\inf_{\alpha\in A_{X}}\{\alpha \rho(Z)\} \ \ &\mbox{ if } A_{X}:=\{\alpha\in[0,1]: \exists Y\succeq_2 X, Y\sim_2\alpha Z\}\neq\emptyset, \\
        +\infty &\mbox{ otherwise,}
    \end{cases}
\end{equation*}
where 
\begin{equation*}
   \rho_Z(X)=
    \begin{cases}
        \alpha \rho(Z) \ &\mbox{ if } \exists\alpha\in[0,1] \mbox{ s.t. } X\sim_2 \alpha Z, \\
        +\infty &\mbox{ otherwise.}
    \end{cases}
\end{equation*}

Note that for any fixed $X\in\mathcal{X}$ the infimum in the definition of $\tilde\rho_Z(X)$ is always attained as soon as \mbox{$A_{X}\neq \emptyset$.} 
The proof of this statement is analogous to the case of CSD-consistency. \smallskip

\textit{$\sigma(\mathcal{X},\mathcal{X}^*)$-lower semicontinuity of $\tilde \rho_Z$:}
Under the monotonicity assumption, $\sigma(\mathcal{X},\mathcal{X}^*)$-lower semicontinuity is equivalent to continuity from below. 
Let $(X_n)_{n\in\mathbb{N}}\subseteq \mathcal{X}$ be a sequence such that $X_n\uparrow X\in\mathcal{X}$ a.s.; then we want to show that $\tilde\rho_{\gamma}(X_n)\to\tilde\rho_{\gamma}(X)$ for any $\gamma\in\Gamma$. 
By monotonicity, it is enough to prove that $\ds\liminf_{n\to\infty}\tilde\rho_{\gamma}(X_n)\geq\tilde\rho_{\gamma}(X)$ for any $\gamma\in\Gamma$. 
By hypotheses we have $X_n\uparrow X$, hence the continuity from below of Expected Shortfall ensures that $ES_{\beta}(X)=\ds\lim_{n\to\infty}ES_{\beta}(X_n)$. 
We start assuming $\tilde\rho_Z(X)<+\infty$. 
Once again, we can consider $(X_n)_{n\in\mathbb{N}}$ such that for each $n\in\mathbb{N}$ there exists $Y_n\succeq_2 X_n$ verifying $Y_n\sim_2 \alpha_nZ$ with $\alpha_n\in[0,1]$ attaining the minimum of the set $A_{X_n}$. 
Thus, 
$$ES_{\beta}(X)=\lim_{n\to\infty}ES_{\beta}(X_n)\leq\liminf_{n\to\infty} ES_{\beta}(Y_n)=\tilde\alpha ES_{\beta}(Z),$$
for any $\beta\in [0,1)$, with $\tilde\alpha:=\ds\liminf_{n\to\infty}\alpha_n$, where the inequality follows from the SSD-consistency of $ES_{\beta}$ and the relation $Y_n\succeq_2 X_n$ for all $n\in\mathbb{N}$. 
The above chain of inequalities, the definition of $\tilde\rho_Z$, the SSD-consistency of $\tilde\rho_Z$ and the relation $\tilde\rho_Z\leq\rho_Z$ yield:
$$\tilde\rho_Z(X)\leq \liminf_{n\to\infty} \tilde\rho_Z(Y_n)\leq \liminf_{n\to \infty}\rho_Z(\alpha_n Z)=\tilde\alpha \rho(Z)=\liminf_{n\to\infty}\tilde\rho_Z(X_n),$$
hence $\ds\liminf_{n\to\infty}\tilde\rho_Z(X_n)\geq \rho(X)$.
Similarly, if we assume $\rho_Z(X)=+\infty$ it can be proved that \mbox{$\ds\liminf_{n\to\infty}\rho_Z(X_n)=+\infty$}, thus the thesis follows.\smallskip

\textit{The positively homogeneous case:} In the case of positive homogeneity, we can similarly deduce the $\sigma(\mathcal{X},\mathcal{X}^*)$-lower semicontinuity of $\tilde\rho_Z$ and $\tilde f_Z$ for SSD- and CSD-consistent functionals, respectively. 
The key difference is that in this setting, the parameter $\alpha$ falls within the interval $\alpha\in[0,+\infty)$ rather than the previously considered $\alpha\in[0,1]$.
\end{proof}

The following corollary shows that we are also able to characterize cash-additive and SSD-consistent star-shaped risk measures through a representation à la Kusuoka, as robustification of Expected Shortfall. 
Notably, these findings appear to be unprecedented, even when considering the setting in which $\mathcal{X}=L^{\infty}$.

\begin{corollary}
Let $\mathcal{X}$ be a rearrangement-invariant space. 
A risk measure \mbox{$\rho:\mathcal{X}\to\R\cup\{+\infty\}$} is SSD-consistent, star-shaped and cash-additive if and only if $\rho$ admits the representation:
\begin{equation*}
    \rho(X)=\min_{\gamma\in\Gamma}\tilde\rho_{\gamma}(X)=\min_{\gamma\in\Gamma}\sup_{\mu\in\mathcal{P}((0,1])}\left\{\int_0^1ES_s(X)d\mu(s)-\alpha_{\gamma}(\mu)\right\},
\end{equation*}
where $\mathcal{P}((0,1])$ is the set of probability measures on $((0,1],\mathcal{B}((0,1])$, $\Gamma$ is a set of indexes, $(\tilde \rho_{\gamma})_{\gamma\in\Gamma}$ is the family of risk measures provided in Proposition~\ref{cor:CA} and $\alpha_{\gamma}$ is a penalty function defined for any $\mu\in\mathcal{P}((0,1])$ as:
\begin{equation*}
\alpha_{\gamma}(\mu):=\sup\left\{\int_{0}^{1}ES_s(X)d\mu(s); \ X\in L^{\infty}, \ \rho_{\gamma}(X)\leq0\right\}.
\end{equation*}
\end{corollary}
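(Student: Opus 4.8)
The plan is to reduce the statement to the classical law-invariant Kusuoka representation, applied componentwise to the convex family furnished by Proposition~\ref{cor:CA}. For the `only if' direction I would first invoke Proposition~\ref{cor:CA} to write $\rho(X)=\min_{\gamma\in\Gamma}\tilde\rho_\gamma(X)$, where each $\tilde\rho_\gamma$ is an SSD-consistent, convex and cash-additive risk measure with $\tilde\rho_\gamma(0)=\rho(0)$. Since SSD-consistency entails both monotonicity and law-invariance (Remark~\ref{rem:LI}), every $\tilde\rho_\gamma$ is a law-invariant convex \emph{monetary} risk measure on the rearrangement-invariant space $\mathcal{X}$, which is precisely the class to which a Kusuoka-type theorem applies.

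Next I would establish that each $\tilde\rho_\gamma$ enjoys the Fatou property, equivalently $\sigma(\mathcal{X},\mathcal{X}^*)$-lower semicontinuity, which under monotonicity coincides with continuity from below. This is exactly the argument already carried out in the proof of Proposition~\ref{prop:Kusoka} for the family $(\tilde\rho_\gamma)$: the infimum defining $\tilde\rho_\gamma$ is attained, and the Lebesgue property (resp.\ continuity from below) of $ES_\beta$ transfers to $\tilde\rho_\gamma$ through the limiting estimates on the minimizing $\alpha_n$. The only modification is to carry the additive constant $c$ coming from cash-additivity through the same inequalities, which does not affect the convergence. With lower semicontinuity secured, I would apply the law-invariant Kusuoka representation for convex monetary risk measures (the $ES$-mixture form obtained by reparametrising the dual representation of Proposition~5.1 in~\cite{BKMS21} via the identity $\int_0^1 VaR_\beta(X)VaR_\beta(Y)\,d\beta=\int_0^1 ES_s(X)\,d\mu(s)$ for a suitable $\mu\in\mathcal{P}((0,1])$ attached to each $Y\in L^\infty_+$), obtaining $$\tilde\rho_\gamma(X)=\sup_{\mu\in\mathcal{P}((0,1])}\Big\{\int_0^1 ES_s(X)\,d\mu(s)-\alpha_\gamma(\mu)\Big\},$$ with $\alpha_\gamma$ the minimal penalty $\mu\mapsto\sup_{X\in\mathcal{X}}\{\int_0^1 ES_s(X)\,d\mu(s)-\tilde\rho_\gamma(X)\}$.

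I would then simplify the penalty using cash-additivity. For arbitrary $X$, setting $c=\tilde\rho_\gamma(X)$ and replacing $X$ by $X-c$, the cash-additivity of $ES_s$ together with $\mu((0,1])=1$ gives $\int_0^1 ES_s(X)\,d\mu(s)-\tilde\rho_\gamma(X)=\int_0^1 ES_s(X-c)\,d\mu(s)$, and $\tilde\rho_\gamma(X-c)=0$; conversely $-\tilde\rho_\gamma(X')\ge 0$ whenever $\tilde\rho_\gamma(X')\le 0$. Hence the defining supremum equals $\sup\{\int_0^1 ES_s(X)\,d\mu(s):\tilde\rho_\gamma(X)\le 0\}$, i.e.\ the support function of the acceptance set, which is the stated $\alpha_\gamma$. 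Combining with $\rho=\min_\gamma\tilde\rho_\gamma$ yields the displayed representation. The `if' direction is immediate: each inner supremum is a law-invariant, convex, cash-additive and SSD-consistent risk measure, being a supremum over $\mu$ of the SSD-consistent cash-additive affine functionals $X\mapsto\int_0^1 ES_s(X)\,d\mu(s)-\alpha_\gamma(\mu)$; these all agree at $0$, so by the `if' part of Proposition~\ref{cor:CA} their pointwise minimum is star-shaped, cash-additive and SSD-consistent.

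The main obstacle I anticipate is the clean passage between the two equivalent forms of the Kusuoka representation — from the bilinear quantile pairing $\int_0^1 VaR_\beta(X)VaR_\beta(Y)\,d\beta$ delivered by~\cite{BKMS21} to the $ES$-mixture form $\int_0^1 ES_s(X)\,d\mu(s)$ — carried out on the general rearrangement-invariant space rather than on $L^\infty$, together with the precise identification of the penalty as a support function restricted to bounded random variables. Verifying that this reparametrisation is valid, that the attached measure $\mu$ genuinely lives on $(0,1]$, and that restricting the acceptance-set supremum to $X\in L^\infty$ loses nothing (a density/approximation argument on $\mathcal{X}$) are the steps requiring the most care; by contrast the Fatou property and the cash-additivity reduction are routine adaptations of the arguments already developed in Propositions~\ref{prop:Kusoka} and~\ref{cor:CA}.
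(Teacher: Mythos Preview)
Your proposal is correct and follows the same overall strategy as the paper: invoke Proposition~\ref{cor:CA} to obtain the convex family $(\tilde\rho_\gamma)$, establish $\sigma(\mathcal{X},\mathcal{X}^*)$-lower semicontinuity of each $\tilde\rho_\gamma$ via continuity from below (carrying the additive constant $c$ through the compactness/limiting argument of Proposition~\ref{prop:Kusoka}, exactly as you describe), and then apply a Kusuoka representation from~\cite{BKMS21}. The one difference is that the paper invokes Proposition~5.12 of~\cite{BKMS21} directly, which already delivers the $ES$-mixture form together with the acceptance-set penalty over $L^\infty$; this sidesteps entirely the reparametrisation from the quantile-pairing form of Proposition~5.1 and the subsequent reduction of the penalty that you flag as the main obstacles. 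So your anticipated difficulties are genuine but avoidable: the sharper citation resolves them at once, and the only substantive work remaining is the continuity-from-below argument, which you have correctly outlined.
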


    \begin{proof}
              The `if' part is trivial. 
              Let us prove only the converse implication. 
              The proof is similar to the proof of Proposition~\ref{prop:Kusoka}. 
                For brevity we suppose $\rho(0)=0$ and we use the same notation as in Proposition~\ref{prop:Kusoka}.  We recall that for any $\gamma:=Z\in \mbox{Dom}(\rho)$ the functional $\tilde\rho_Z$ is given by $$\tilde\rho_Z(X):=\inf\{\rho_Z(Y):Y\succeq_2 X\}, \ X\in\mathcal{X}.$$ 
                We know that $\tilde\rho_Z$ is SSD-consistent, convex, monotone and cash-additive. 
                If we can prove that $\tilde\rho_Z$ is also continuous from below, then the thesis follows from Proposition~5.12 in~\cite{BKMS21}, as law-invariance is implied by SSD-consistency and $\sigma(\mathcal{X},\mathcal{X}^*)$-lower semicontinuity is equivalent to continuity from below for a monotone functional defined on a rearrangement invariant space.\smallskip
            
               \textit{Equivalent definition of $\tilde\rho_Z$:}
                Once again we can write an equivalent expression for $\tilde{\rho}_Z$. 
                Fixing $Z\in \mbox{Dom}(\rho)$, for each $X\in\mathcal{X}$ we define the set $B_X:=\{(\alpha,c)\in[0,1]\times\R:\exists Y\in\mathcal{X} \mbox{ s.t. } Y\succeq_2 X \mbox{ and } Y\sim_2\alpha Z+c\}$. 
                Then $\rho_Z$ takes the following form:
                \begin{equation*}
                   \tilde\rho_Z(X)=\begin{cases}
                        \ds\inf_{(\alpha,c)\in B_X}\{\alpha \rho(Z)+c\} \ \ &\mbox{ if } B_X\neq\emptyset, \\
            +\infty &\mbox{ otherwise. }
                    \end{cases}
                \end{equation*}
                We observe that when $B_X$ is non-empty the infimum in the previous equation must be finite given that $\tilde{\rho}_Z(X)\geq \rho(X)>-\infty$. 
                In addition, when $B_X$ is non-empty the infimum is in fact a minimum. 
                Indeed, arguing as in the proof of Proposition~\ref{prop:Kusoka}, there exist a sequence of real numbers $(\alpha_j,c_j)_{j\in\mathbb{N}}\subseteq[0,1]\times\R$ and a sequence of random variables $Y_j$ such that $\alpha_j \rho(Z)+c_j\to\ds\inf_{(\alpha,c)\in B_X}\{\alpha \rho(Z)+c\}$ for $j\to\infty$ and $Y_j\succeq X, Y_j\sim_2\alpha_j Z+c_j$ for any $j\in\mathbb{N}$. 
                By compactness of $[0,1]$ there exists a sub-sequence $(\alpha_{j_i})_{i\in\mathbb{N}}$ converging to some $\bar\alpha\in[0,1]$. 
                Moreover, $\ds\lim_{i\to\infty}\{\alpha_{j_i}\rho(Z)+c_{j_i}\}=\ds\inf_{(\alpha,c)\in B_X}\{\alpha \rho(Z)+c\}$, hence $\ds\lim_{i\to\infty}c_{j_i}$ exists and we define $\bar c:=\ds\lim_{i\to\infty}c_{j_i}$. Let us consider $\bar Y:=\bar \alpha Z+\bar c$. 
                By construction we have $\bar Y\sim_2 \bar\alpha Z+\bar c$. 
                Moreover, by cash-additivity and positive homogeneity of Expected Shortfall we have $\alpha_{j_i} ES_{\beta}(Z)+c_{j_i}\geq X$ for any $\beta\in[0,1)$ and $i\in\mathbb{N}$. 
                Letting $i\to\infty$ we get $ES_{\beta}(\bar Y)\geq X$ for any $\beta\in[0,1)$, hence the infimum is attained at $(\bar{\alpha},\bar c)\in[0,1]\times\R$.\smallskip
                
                \textit{Continuity from below of $\tilde\rho_Z$:} Given $X_n\uparrow X$ we want to prove that $\ds\lim_{n\to\infty}\tilde\rho_Z(X_n)=\tilde\rho_Z(X)$. 
                Due to monotonicity, it suffices to demonstrate $\ds\liminf_{n\to\infty}\tilde\rho_Z(X_n)\geq \tilde\rho_Z(X)$. We assume $\tilde\rho_Z(X)<+\infty$, which, by monotonicity, implies $\ds\liminf_{n\to\infty}\tilde\rho_Z(X_n)<+\infty$. 
                Thus, we can consider a sequence $(X_{n})_{n\in\mathbb{N}}$ such that $\tilde{\rho}_Z(X_{n})<+\infty \ \ \forall n\in\mathbb{N},$ verifying $\tilde{\rho}_Z(X_{n})={\rho}_Z(Y_{n})=\alpha_{n}\rho(Z)+c_{n}$. Here, for each $n\in\mathbb{N}$, $(\alpha_{n},c_{n})\in[0,1]\times\mathbb{R}$ attains the infimum of $B_{X_n}$, with $Y_{n}\succeq_2 X_{n}$ and $Y_{n}\sim_2\alpha_{n} Z+c_{n}$.
                Because $[0,1]$ is a compact set, we can extract a sub-sequence $(\alpha_{n_{k}})_{k\in\mathbb{N}}$ such that $\alpha_{n_{k}}\to\tilde{\alpha}\in[0,1]$. 
                By monotonicity of $\tilde{\rho}_Z$, we infer that $\tilde{\rho}_Z(X_{n})$ must converge to some $l\in\mathbb{R}$ (otherwise, we would have $\ds\liminf_{n\to\infty}\tilde\rho_Z(X_{n})=+\infty$, which contradicts the hypothesis). 
                Thus, also $\tilde{\rho}_Z(X_{n_{k}})\to l$ and then $c_{n_{k}}=\tilde\rho_Z(X_{n_{k}})-\alpha_{n_{k}}\rho(Z)\to l-\tilde{\alpha}\rho(Z):=\tilde{c}\in\R$.

Since $X_n\uparrow X$, continuity from below of Expected Shortfall ensures that $ES_{\beta}(X)=\ds\lim_{n\to\infty}ES_{\beta}(X_n)$. 
Thus, 
$$ES_{\beta}(X)=\lim_{k\to\infty}ES_{\beta}(X_{n_k})\leq\liminf_{k\to\infty} ES_{\beta}(Y_{n_k})=\tilde\alpha ES_{\beta}(Z)+\tilde c,$$
for any $\beta\in [0,1)$, where the inequality follows from the SSD-consistency of $ES_{\beta}$ and the relation $Y_{n_k}\succeq_2 X_{n_k}$ for all $k\in\mathbb{N}$. 
The above chain of inequalities, the definition of $\tilde\rho_Z$ and the SSD-consistency of $\tilde\rho_Z$ yield:
                \begin{eqnarray*}
                    \tilde{\rho}_Z(X)&\stackrel{\mathrm{SSD}}\leq & \lim_{k\to\infty}\tilde{\rho}_Z( Y_{n_{k}})\stackrel{\mathrm{\tilde{\rho}_Z\leq \rho_Z}}\leq \lim_{k\to\infty} \rho_Z(Y_{n_{k}}) \\ &\stackrel{\mathrm{def. \rho_Z}}=&\tilde{\alpha}\rho(Z)+\tilde{c}= \lim_{k\to\infty} \{\alpha_{n_k}\rho(Z)+c_{n_{k}}\} \\
&=&\lim_{k\to\infty}\tilde{\rho}_Z(X_{n_{k}})=\liminf_{n\to\infty}\tilde{\rho}_Z(X_n).
                \end{eqnarray*}
                If $\tilde{\rho}_Z(X)=+\infty$ we can proceed by contradiction, proving that also $\ds\liminf_{n\to\infty}\tilde{\rho}_Z(X_n)=+\infty$. 
                The thesis follows.                 
\end{proof}

\section{Law-Invariant Star-Shaped Functionals as Robustification of Value-at-Risk}
\label{sec:vr}
While we have proved that every SSD-consistent (resp.\ CSD-consistent) star-shaped functional arises as the minimum of SSD-consistent (resp.\ CSD-consistent) convex functionals, we aim to characterize the bigger family of law-invariant and star-shaped functionals in terms of Value-at-Risk.
Indeed, as explained in the previous sections, not every law-invariant shar-shaped functional can be described as the minimum of a family of law-invariant convex functionals, having the Value-at-Risk as a classical counter-example to this statement. 
See also Section~7 in~\cite{CCMTW22} and the references therein for a thorough discussion on this topic.
More specifically, we are seeking to establish representations in a similar spirit to those provided in Theorem~5 of~\cite{CCMTW22} and Proposition~A.4 of~\cite{HWWX22}. 
In \cite{CCMTW22}, law-invariant star-shaped monetary risk measures appear as a robustification of Value-at-Risk, where the generalized scenarios are represented by a penalty function dependent on the $\beta$-level of $VaR_{\beta}$. A similar representation is found in \cite{HWWX22} for cash-subadditive risk measures, even when star-shapedness is not a requirement in this case.
In the following proposition we show that this representation can be extended to the more general setting in which cash-additivity (or cash-subadditivity) and monotonicity are discarded, and $\mathcal{X}$ is a (much) more general space than $L^{\infty}$. 
The following results are presented under the assumption of normalization. It is important to note that the proof strategy remains applicable even for a \textit{non-normalized} functional.

\begin{proposition}
Let $f:\mathcal{X}\to\R\cup\{+\infty\}$ be a functional such that $f(0)=0$. 
Then, $f$ is law-invariant and star-shaped if and only if it admits the following representation:
\begin{equation}
    f(X)=\min_{Z\in\Gamma_{X}}\sup_{\beta\in(0,1)}\left\{VaR_{\beta}(X)-\gamma^X_Z(\beta)\right\}, \ X\in\mathcal{X}.
    \label{eq:VaRSS}
\end{equation}
Here, for each $X\in\mathcal{X}$, $\Gamma_X$ is a set of random variables such that $\lambda\Gamma_{X}\subseteq\Gamma_{\lambda X}$ for any $\lambda\in(0,1]$. 
In addition, for any $X,X'\in\mathcal{X}$ such that $X\sim X'$, it holds that $\Gamma_{X}=\Gamma_{X'}$. 
For each $X\in\mathcal{X}$ and $Z\in\Gamma_X$, $\gamma^X_Z:(0,1)\to\R\cup\{-\infty\}$ is an increasing function such that $\ds\max_{Z\in\Gamma_{0}}\, \gamma_Z^{0}(0^+) =0$ and for any $\lambda\in(0,1],Z\in\Gamma_{X}$ it holds that
${\gamma^{\lambda X}_{\lambda Z}}\geq\lambda\gamma^{X}_{Z}$. 

\noindent Furthermore, for each $X\in\mathcal{X}$, the set $\Gamma_X$ takes the following form:
\begin{equation*}
    \Gamma_X:=\left\{Z\in\mbox{Dom}(f): X\sim_1\alpha Z, \ \alpha\in[0,1]\right\},
\end{equation*}
and the function $\gamma^{X}_Z:(0,1)\to\R\cup\{-\infty\}$ with $Z\in\Gamma_X$ is defined by the formula:
$$\gamma^X_Z(\beta)=\alpha(VaR_{\beta}(Z)-f(Z)),$$
with that convention that if $\Gamma_X=\emptyset$ then $\gamma_Z^X=-\infty.$

\noindent With the same notation as above, $f:\mathcal{X}\to\R\cup\{+\infty\}$ is law-invariant and positively homogeneous if and only $f$ can be represented as in Equation~\eqref{eq:VaRSS}, with $\lambda\in[0,+\infty)$, $\lambda\Gamma_X=\Gamma_{\lambda X}$ and ${\gamma^{\lambda X}_{\lambda Z}}=\lambda\gamma^{X}_{Z}$. 
In addition, the set $\Gamma_X$ can be defined for all $\alpha\in[0,+\infty).$
\label{prop:VaRRo}
\end{proposition}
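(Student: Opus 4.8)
The plan is to prove both implications, regarding the explicit construction of $\Gamma_X$ and $\gamma^X_Z$ as the substantive content of the `only if' direction, while the `if' direction reduces to checking that the listed scaling and normalization properties force star-shapedness and law-invariance. For the `if' direction I would argue as follows. Since $VaR_{\beta}$ is law-invariant, $\Gamma_X=\Gamma_{X'}$ whenever $X\sim X'$, and $\gamma^X_Z$ depends on $X$ only through its law, representation~\eqref{eq:VaRSS} gives $f(X)=f(X')$, i.e.\ law-invariance. For star-shapedness, fix $\lambda\in(0,1]$ and $X\in\mathcal{X}$; restricting the outer minimum to elements $\lambda Z$ with $Z\in\Gamma_X$ (using $\lambda\Gamma_X\subseteq\Gamma_{\lambda X}$), invoking $VaR_{\beta}(\lambda X)=\lambda VaR_{\beta}(X)$, and the bound $\gamma^{\lambda X}_{\lambda Z}\geq\lambda\gamma^X_Z$, I obtain
$$f(\lambda X)\leq\min_{Z\in\Gamma_X}\sup_{\beta\in(0,1)}\left\{\lambda VaR_{\beta}(X)-\lambda\gamma^X_Z(\beta)\right\}=\lambda f(X),$$
which is star-shapedness as $f(0)=0$. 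The normalization $f(0)=0$ follows from $\max_{Z\in\Gamma_0}\gamma^0_Z(0^+)=0$ and monotonicity of each $\gamma^0_Z$, since $VaR_{\beta}(0)=0$ gives $\sup_{\beta}\{-\gamma^0_Z(\beta)\}=-\gamma^0_Z(0^+)$.

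The heart of the argument is the `only if' direction. Given $f$ law-invariant, star-shaped and normalized, I would define $\Gamma_X$ and $\gamma^X_Z$ by the explicit formulas in the statement and first address well-definedness: whenever $Z\neq0$, the scalar $\alpha\in[0,1]$ with $X\sim_1\alpha Z$ is unique, because $\alpha_1Z\sim_1\alpha_2Z$ forces $(\alpha_1-\alpha_2)VaR_{\beta}(Z)=0$ for all $\beta$, while the quantile function of $Z$ vanishes identically only if $Z=0$ (an argument parallel to the one in the proof of Theorem~\ref{th:mainth}); for $Z=0$ one has $\gamma^X_0\equiv0$ regardless of $\alpha$. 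The crucial step is the collapse of the inner supremum: since $X\sim_1\alpha Z$ is equivalent to $VaR_{\beta}(X)=\alpha VaR_{\beta}(Z)$ for every $\beta\in(0,1)$, substituting into $\gamma^X_Z(\beta)=\alpha(VaR_{\beta}(Z)-f(Z))$ yields
$$\sup_{\beta\in(0,1)}\left\{VaR_{\beta}(X)-\gamma^X_Z(\beta)\right\}=\sup_{\beta\in(0,1)}\left\{\alpha f(Z)\right\}=\alpha f(Z).$$

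Consequently the representation reduces to $f(X)=\min_{Z\in\Gamma_X}\alpha_Z f(Z)$, where $\alpha_Z$ is the scalar attached to $Z$. Taking $Z=X$ with $\alpha=1$ shows the minimum is at most $f(X)$, while for any $Z\in\Gamma_X$ law-invariance and star-shapedness give $f(X)=f(\alpha_Z Z)\leq\alpha_Z f(Z)+(1-\alpha_Z)f(0)=\alpha_Z f(Z)$, so the minimum equals $f(X)$ and is attained at $Z=X$. The degenerate stratum $X\notin\mbox{Dom}(f)$ is handled by observing that then $\Gamma_X=\emptyset$ (any admissible $Z\in\mbox{Dom}(f)$ with $X\sim_1\alpha Z$ would force $f(X)\leq\alpha f(Z)<+\infty$), whence the empty minimum equals $+\infty=f(X)$, matching the stated convention $\gamma^X_Z=-\infty$.

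It then remains to verify that this explicit pair satisfies the abstract properties. Monotonicity of $\gamma^X_Z$ in $\beta$ is inherited from that of $\beta\mapsto VaR_{\beta}(Z)$ together with $\alpha\geq0$. For $\lambda\Gamma_X\subseteq\Gamma_{\lambda X}$ I would use that $Z\in\mbox{Dom}(f)$ implies $\lambda Z\in\mbox{Dom}(f)$ (by star-shapedness, $f(\lambda Z)\leq\lambda f(Z)<+\infty$) and that $X\sim_1\alpha Z$ implies $\lambda X\sim_1\alpha(\lambda Z)$; the inequality $\gamma^{\lambda X}_{\lambda Z}\geq\lambda\gamma^X_Z$ then reduces, after cancelling the common $\alpha\lambda VaR_{\beta}(Z)$ term, to $\alpha(\lambda f(Z)-f(\lambda Z))\geq0$, again star-shapedness. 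The normalization $\max_{Z\in\Gamma_0}\gamma^0_Z(0^+)=0$ holds because $\alpha=0$ is admissible for every $Z\in\mbox{Dom}(f)=\Gamma_0$, forcing $\gamma^0_Z\equiv0$. Finally, the positively homogeneous case runs identically with $\alpha$ ranging over $[0,+\infty)$: positive homogeneity upgrades each star-shapedness inequality to an equality, yielding $\lambda\Gamma_X=\Gamma_{\lambda X}$ and $\gamma^{\lambda X}_{\lambda Z}=\lambda\gamma^X_Z$ for all $\lambda\geq0$. I expect the main obstacle to be not the core min--sup identity, which is short and clean, but the uniform treatment of the degenerate cases ($Z=0$, $\alpha\in\{0,1\}$, and $X\notin\mbox{Dom}(f)$) and the well-definedness of $\alpha$, where the bookkeeping needed to make the scaling relations hold across all strata is concentrated.
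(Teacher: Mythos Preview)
Your proposal is correct and follows essentially the same approach as the paper: the same explicit construction of $\Gamma_X$ and $\gamma^X_Z$, the same collapse of the inner supremum via $VaR_{\beta}(X)=\alpha VaR_{\beta}(Z)$ to reduce the representation to $\min_{Z\in\Gamma_X}\alpha_Z f(Z)$, the choice $Z=X$ to attain the minimum, and the same property verifications for both directions. Your treatment of the degenerate cases ($Z=0$, $X\notin\mbox{Dom}(f)$, the law-dependence of $\gamma^X_Z$) is in fact slightly more explicit than the paper's, but the argument is otherwise the same.
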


\begin{proof}
We prove the statement in the star-shapedness case, the positively homogeneous case follows similarly.  
We assume $X\in\mbox{Dom}(f)$, otherwise the statements are trivial.\footnote{It is worth noting that $\Gamma_X\neq\emptyset \iff X\in\mbox{Dom}(f)$. 
Indeed, if $X\in\mbox{Dom}(f)$, then we can choose $Z=X$, which clearly lies in $\Gamma_X$ with $\alpha=1$. 
Conversely, if $X\not\in\mbox{Dom}(f)$, then for any $Z\in\mathcal{X}$ such that $X\sim_1\alpha Z$, we have that $\alpha f(Z)\geq f(\alpha Z)=f(X)=+\infty$. 
Thus, $f(Z)=+\infty$, and hence $Z\not\in\Gamma_X$ since $Z\not\in\mbox{Dom}(f)$. 
This observation applies, with minor modifications, to all the results presented in Section~\ref{sec:vr}.}

We start by proving the `only if' part. 
It is worth noting that $\Gamma_X$ is well defined, given that if $X\sim_1\alpha Z$, with $\alpha\in[0,1]$ or $\alpha\in[0,+\infty)$, then $\alpha$ is unique, as we have shown in the first part of the proof of Theorem~\ref{th:mainth}.\smallskip

\textit{Min-sup representation:} Let us consider the functional
    $$X\mapsto\tilde\gamma_{Z,\beta}(X):=VaR_{\beta}(X)-\gamma^X_Z(\beta),$$
    with $\beta\in(0,1)$, and $Z\in\Gamma_X$. 
    It holds that:
    \begin{align*}
    \tilde\gamma_{Z,\beta}(X)&=VaR_{\beta}(X)-\alpha(VaR_{\beta}(Z)-f(Z)) \\
    &=VaR_{\beta}(X)-VaR_{\beta}(\alpha Z)+\alpha f(Z) \\ 
    &=\alpha f(Z)\geq f(\alpha Z) = f(X),
    \end{align*}
    where law-invariance of $VaR$ implies $VaR_{\beta}(X)-VaR_{\beta}(\alpha Z)=0$, while the inequality is due to star-shapedness of $f$ and the last equality follows from law-invariance of $f$. 
    Thus, $\ds\inf_{Z\in\Gamma_X}\sup_{\beta\in(0,1)}\tilde\gamma_{Z,\beta}(X)\geq f(X)$. 
    Observing that $X\in\Gamma_X$ and $\tilde\gamma_{X,\beta}(X)=f(X)$ for any \mbox{$\beta\in(0,1)$} we get:
    $$f(X)=\min_{Z\in\Gamma_X}\sup_{\beta\in(0,1)}\left\{VaR_{\beta}(X)-\gamma^X_Z(\beta)\right\}.$$
    
\textit{Properties of $\Gamma_X$:} It is a routine verification to prove that $\Gamma_{\lambda X}\subseteq\lambda\Gamma_{X}$ for any $\lambda\in(0,1]$, once observed that $Z\in\mbox{Dom}(f)\implies\lambda Z\in\mbox{Dom}(f)$. 
Indeed, if $Z\in\mbox{Dom}(f)$ then $f(Z)<+\infty$ and using star-shapedness of $f$ it follows that $f(\lambda Z)\leq\lambda f(Z)<+\infty$, thus $\lambda Z\in\mbox{Dom}(f)$. 
We only need to demonstrate that if $X\sim X'$, then $\Gamma_{X}=\Gamma_{X'}$. 
Indeed, if $X\sim X'$, it also follows that $X\sim_1 X'$. 
Consequently, we have:
    $$Z\in\Gamma_{X}\iff \exists \alpha\in[0,1] \ | \ X\sim_1\alpha Z \iff X'\sim_1 \alpha Z \iff Z\in\Gamma_{X'}.$$

\textit{Properties of $\gamma^{X}_Z$: }
Let us fix $\lambda\in(0,1]$. 
Due to the star-shaped property of $\rho$ and the set inclusion $\lambda\Gamma_{X}\subseteq\Gamma_{\lambda X}$ for any $\lambda \in (0,1]$, it follows that for any $Z \in \Gamma_{X}$ and $\beta \in (0,1)$:
\begin{equation*}
  \lambda\gamma^{X}_Z={\lambda}\alpha(VaR_{\beta}(Z)-f(Z))\leq\alpha(VaR_{\beta}({\lambda}Z)-f({\lambda}Z))=\gamma^{\lambda X}_{\lambda Z}.
\end{equation*}
    In addition, we have:
    $$\rho(0)=\min_{Z\in\Gamma_{0}}\sup_{\beta\in(0,1)}\{-\gamma^{0}_Z(\beta)\}=\min_{Z\in\Gamma_{0}}\{-\gamma^{0}_Z(0^+)\}=0,$$
   where the second equality follows from increasing monotonicity of $VaR_{\beta}$ w.r.t.\ $\beta$, while the last equality is due to normalization. 
   Thus, $\ds\max_{Z\in\Gamma_{0}}\gamma^{0}_Z(0^+)=0$.
\smallskip   

Now we are ready to prove the converse implication. 
We want to show that $$\rho(X)=\min_{Z\in\Gamma_{X}}\sup_{\beta\in(0,1)}\left\{VaR_{\beta}(X)-\gamma^X_Z(\beta)\right\}, \ X\in\mathcal{X},$$ is normalized, star-shaped and law-invariant.

\textit{Normalization:} We have that:
$$\rho(0)=\min_{Z\in\Gamma_0}\sup_{\beta\in(0,1)}\{-\gamma_Z^{0}(\beta)\}=-\max_{Z\in\Gamma_0}\gamma_Z^0(0^+)=0.$$
   
\textit{Star-shapedness:} Let us consider $\lambda\in(0,1]$ and $X\in\mathcal{X}$. 
We have the following inequalities:
\begin{align*}
    \rho(\lambda X)&=\min_{Z\in\Gamma_{\lambda X}}\sup_{\beta\in(0,1)}\{VaR_{\beta}(\lambda X)-\gamma_Z^{\lambda X}(\beta)\} \\
    &=\lambda\min_{Z\in\Gamma_{\lambda X}}\sup_{\beta\in(0,1)}\{VaR_{\beta}(X)-\frac{1}{\lambda}{\gamma_Z^{\lambda X}}(\beta)\} \\
    &\leq\lambda\min_{Z\in\lambda\Gamma_{X}}\sup_{\beta\in(0,1)}\{VaR_{\beta}(X)-\frac{1}{\lambda}{\gamma_Z^{\lambda X}}(\beta)\} \\
    &\leq \lambda\min_{Z\in\lambda \Gamma_{X}}\sup_{\beta\in(0,1)}\{VaR_{\beta}(X)-\gamma^X_{\frac{Z}{\lambda}}(\beta)\} \\ 
    & =\lambda\min_{Z'\in\Gamma_X}\sup_{\beta\in(0,1)}\{VaR_{\beta}(X)-\gamma^X_{Z'}(\beta)\}=\lambda\rho(X),
\end{align*}
where the inequalities follow from $\gamma_{\lambda Z}^{\lambda X}\geq\lambda \gamma^{X}_{Z}$ and $\lambda \Gamma_{X}\subseteq\Gamma_{\lambda X}.$

   \textit{Law-invariance:} This assertion follows from the law-invariance of VaR and the relation $\Gamma_{X}=\Gamma_{X'}$ as long as $X\sim X'$. 
Thus, we can express it as follows:
$$f(X)=\min_{Z\in\Gamma_X}\sup_{\beta\in(0,1)}\left\{VaR_{\beta}(X)-\gamma^X_Z(\beta)\right\}=\min_{Z\in\Gamma_{X'}}\sup_{\beta\in(0,1)}\left\{VaR_{\beta}(X')-\gamma^X_Z(\beta)\right\}=f(X').$$
\end{proof}

\begin{corollary}
    A functional $f:\mathcal{X}\to\R\cup\{+\infty\}$ is normalized, law-invariant, star-shaped and cash-additive if and only it admits the representation:
    $$f(X)=\min_{Z\in\Gamma_{X}}\sup_{\beta\in(0,1)}\{VaR_{\beta}(X)-\gamma^X_Z(\beta)\},$$
    where for each $X\in\mathcal{X}$, $\Gamma_{X}$ is a set of random variables such that $\lambda\Gamma_{X}\subseteq\Gamma_{\lambda X}$ for any $\lambda\in(0,1]$, $\Gamma_{X}=\Gamma_{X'}$ whenever $X\sim X'$ and $\Gamma_{X+m}=\Gamma_{X}$ for all $m\in\R$. 
    In addition, $\gamma_Z^X:(0,1)\to\R\cup\{-\infty\}$ is an increasing function such that $\ds\max_{Z\in\Gamma_{0}}\, \gamma_Z^{0}(0^+) =0$. 
    Moreover, for any $Z\in\Gamma_X, \ \lambda \in(0,1]$ it holds that $\lambda\gamma_{Z}^{X}=\gamma_{\lambda Z}^{\lambda X}$ and for any $m\in\R$ it follows that $\gamma_Z^{X+m}=\gamma_Z^{X}.$ 

\noindent Furthermore, for each $X\in\mathcal{X}$, the set $\Gamma_X$ takes the following form:
\begin{equation*}
    \Gamma_X:=\left\{Z\in\mbox{Dom}(f): X\sim_1\alpha Z+c, \ (\alpha,c)\in[0,1]\times\R, \ f(Z)\leq 0\right\},
\end{equation*}
with $(\alpha,c)=(0,X)$ if $X$ is constant. The function \mbox{$\gamma^{X}_Z:(0,1)\to\R\cup\{-\infty\}$} with $Z\in\Gamma_X$ is defined by the formula:
$$\gamma^X_Z(\beta)=\alpha VaR_{\beta}(Z),$$
with that convention that if $\Gamma_X=\emptyset$ then $\gamma_Z^X=-\infty.$

\noindent With the same notation as above, $f:\mathcal{X}\to\R\cup\{+\infty\}$ is law-invariant, positively homogeneous and cash-additive if and only $f$ can be represented as in Equation~\eqref{eq:VaRSS}, with $\lambda\in[0,+\infty)$ and $\lambda\Gamma_X=\Gamma_{\lambda X}$. 
In addition, the set $\Gamma_X$ can be defined for any $\alpha\in[0,+\infty)$.
\label{cor:varnonmonCA}
\end{corollary}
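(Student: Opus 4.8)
The plan is to adapt the proof of Proposition~\ref{prop:VaRRo} to the cash-additive setting, changing only what cash-additivity forces. Throughout I assume $X\in\mbox{Dom}(f)$, since otherwise $\Gamma_X=\emptyset$ and the statement is trivial. For the `only if' direction, assume $f$ is normalized, law-invariant, star-shaped and cash-additive, and take $\Gamma_X$ and $\gamma^X_Z$ as in the statement. First I would check that $\gamma^X_Z$ is well-defined: for non-constant $Z$ the pair $(\alpha,c)$ solving $X\sim_1\alpha Z+c$ is unique, because two distinct values of $\beta\mapsto VaR_\beta(Z)$ pin down the affine map $\beta\mapsto\alpha VaR_\beta(Z)+c$ (this is the $\sim_1$-analogue of the uniqueness argument in Proposition~\ref{cor:CA}); for constant $Z$ the relation forces $X$ constant, where the convention $(\alpha,c)=(0,X)$ applies. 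Recalling that $X\sim_1 Y$ coincides with equality of all quantiles, hence with $X\sim Y$, and using positive homogeneity and cash-additivity of $VaR_\beta$, I compute for every $Z\in\Gamma_X$:
$$\tilde\gamma_{Z,\beta}(X):=VaR_\beta(X)-\gamma^X_Z(\beta)=\bigl(\alpha VaR_\beta(Z)+c\bigr)-\alpha VaR_\beta(Z)=c,$$
so $\sup_{\beta\in(0,1)}\tilde\gamma_{Z,\beta}(X)=c$. The lower bound is where cash-additivity enters: by law-invariance, cash-additivity, star-shapedness and normalization, $f(X)=f(\alpha Z+c)=f(\alpha Z)+c\leq\alpha f(Z)+c\leq c$, the last step using $f(Z)\leq 0$ and $\alpha\geq 0$. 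Hence $\inf_{Z\in\Gamma_X}\sup_\beta\tilde\gamma_{Z,\beta}(X)\geq f(X)$.

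The crucial new step is to exhibit a minimizer. I would take $Z_0:=X-f(X)$ with $\alpha=1$, $c=f(X)$: cash-additivity gives $f(Z_0)=f(X)-f(X)=0\leq 0$, so $Z_0\in\mbox{Dom}(f)$ and $Z_0\in\Gamma_X$, and then $\sup_\beta\tilde\gamma_{Z_0,\beta}(X)=c=f(X)$, realizing the minimum. For constant $X=m$ one instead uses $\alpha=0$, $c=m$ (e.g.\ $Z_0=0$), giving the value $m=f(X)$.

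Next I would verify the structural properties. For $\Gamma_X$: the inclusion $\lambda\Gamma_X\subseteq\Gamma_{\lambda X}$ for $\lambda\in(0,1]$ holds because star-shapedness with normalization gives $f(\lambda Z)\leq\lambda f(Z)\leq 0$, while $X\sim_1\alpha Z+c$ scales to $\lambda X\sim_1\alpha(\lambda Z)+\lambda c$ with the same $\alpha$; the identities $\Gamma_X=\Gamma_{X'}$ for $X\sim X'$ and $\Gamma_{X+m}=\Gamma_X$ follow since $X\sim_1\alpha Z+c$ is equivalent to $X'\sim_1\alpha Z+c$ and to $X+m\sim_1\alpha Z+(c+m)$, respectively. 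For $\gamma^X_Z(\beta)=\alpha VaR_\beta(Z)$: monotonicity in $\beta$ is inherited from $VaR_\beta$ and $\alpha\geq 0$; the normalization $\max_{Z\in\Gamma_0}\gamma^0_Z(0^+)=0$ holds because $X=0$ forces $\alpha=0$, so $\gamma^0_Z\equiv 0$; and because $VaR$ is positively homogeneous and scaling preserves $\alpha$, I now obtain the \emph{equality} $\gamma^{\lambda X}_{\lambda Z}=\lambda\gamma^X_Z$ (rather than the inequality of Proposition~\ref{prop:VaRRo}, the $f(Z)$ term having disappeared), and likewise $\gamma^{X+m}_Z=\gamma^X_Z$ since the shift alters only $c$.

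For the `if' direction I would check the four axioms on $f(X):=\min_{Z\in\Gamma_X}\sup_\beta\{VaR_\beta(X)-\gamma^X_Z(\beta)\}$. Normalization and law-invariance follow as in Proposition~\ref{prop:VaRRo} (using monotonicity of $\gamma^0_Z$ in $\beta$ and $\Gamma_X=\Gamma_{X'}$), and star-shapedness follows from the identical chain of inequalities driven by $\lambda\Gamma_X\subseteq\Gamma_{\lambda X}$ and $\gamma^{\lambda X}_{\lambda Z}=\lambda\gamma^X_Z$. The only genuinely new verification is cash-additivity:
$$f(X+m)=\min_{Z\in\Gamma_{X+m}}\sup_{\beta}\{VaR_\beta(X)+m-\gamma^{X+m}_Z(\beta)\}=m+\min_{Z\in\Gamma_X}\sup_\beta\{VaR_\beta(X)-\gamma^X_Z(\beta)\}=f(X)+m,$$
using $VaR_\beta(X+m)=VaR_\beta(X)+m$ together with $\Gamma_{X+m}=\Gamma_X$ and $\gamma^{X+m}_Z=\gamma^X_Z$. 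The positively homogeneous and cash-additive case is obtained verbatim by allowing $\alpha,\lambda\in[0,+\infty)$ and replacing star-shapedness by positive homogeneity, which upgrades $\lambda\Gamma_X\subseteq\Gamma_{\lambda X}$ to $\lambda\Gamma_X=\Gamma_{\lambda X}$ and $f(\lambda Z)\leq\lambda f(Z)$ to equality. I expect the main obstacle to be purely bookkeeping: the well-definedness of $\gamma^X_Z$ together with the constant-$X$ convention; the substantive content — attaining the minimum at $Z_0=X-f(X)$ and threading cash-additivity through the $c$-coordinate — is short once these technicalities are settled.
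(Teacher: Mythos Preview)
Your proposal is correct and follows essentially the same approach as the paper's proof: the same concrete choice of $\Gamma_X$ and $\gamma^X_Z$, the same minimizer $Z_0=X-f(X)$ via cash-additivity, the same verification of the structural properties, and the same argument for the converse direction. The only organizational difference is that you work directly with $\gamma^X_Z(\beta)=\alpha VaR_\beta(Z)$ and absorb the condition $f(Z)\leq 0$ into the lower-bound step $f(X)\leq\alpha f(Z)+c\leq c$, whereas the paper first establishes the representation with the penalty $\alpha(VaR_\beta(Z)-f(Z))$ and then sandwiches to drop the $f(Z)$ term; this is a cosmetic streamlining rather than a different route.
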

   \begin{proof}
       We prove the statement in the star-shapedness case, the positively homogeneous case follows similarly. 
       We commence by establishing the `only if' part. 
       It is important to note that for any $Z\in\Gamma_X$, the function $\gamma^X_Z$ is well-defined. 
       Indeed, if $X$ is non-constant and $X\sim_1\alpha Z+c$ with $(\alpha,c)\in[0,1]\times\R$, then, as demonstrated in the first part of the proof of Proposition~\ref{cor:CA}, $(\alpha,c)$ is a unique pair of values.
       Additionally, if $X$ is constant, we have $\gamma^X_Z\equiv0,$ for any $Z\in\Gamma_X=\{Z\in\mbox{Dom}(f):f(Z)\leq0\}.$ 
       Henceforth, we assume $X\in\mbox{Dom}(f)$ with $X$ non-constant, otherwise the statements are trivial.\smallskip

\textit{Min-sup representation:} Let us fix $X\in\mbox{Dom}(f)$, $\beta\in(0,1)$, and $Z\in\Gamma_X$.
    It holds that:
    \begin{align*}
    \tilde\gamma_{Z,\beta}:&=VaR_{\beta}(X)-\alpha(VaR_{\beta}(Z)-f(Z)) \\
    &=VaR_{\beta}(X)-VaR_{\beta}(\alpha Z+c)+\alpha f(Z)+c \\ &=\alpha f(Z)+c\geq f(\alpha Z+c) = f(X),
    \end{align*}
    where law-invariance and cash-additivity of $VaR$ implies $VaR_{\beta}(X)-VaR_{\beta}(\alpha Z+c)=0$, while the inequality is due to star-shapedness and cash-additivity of $f$ and the last equality follows from law-invariance of $f$. 
    Thus, $$\ds\inf_{Z\in\Gamma_X}\sup_{\beta\in(0,1)}\{VaR_{\beta}(X)-\alpha(VaR_{\beta}(Z)-f(Z))\}\geq f(X).$$ 
    Since cash-additivity implies $f(X-f(X))=0$ we have $X-f(X)\in\Gamma_X$, with \mbox{$(\alpha,c)=(1,f(X))$} and $\tilde\gamma_{X,\beta}(X)=f(X)$. 
    Thus,
    \begin{align*}
        f(X)&=\min_{Z\in\Gamma_X}\sup_{\beta\in(0,1)}\left\{VaR_{\beta}(X)-\alpha(VaR_{\beta}(Z)-f(Z))\right\} \\ 
        &=\min_{\substack{Z\in\Gamma_X \\ f(Z)=0}}\sup_{\beta\in(0,1)}\left\{VaR_{\beta}(X)-\alpha VaR_{\beta}(Z)\right\} \\
        &=\min_{\substack{Z\in\Gamma_X \\ f(Z)=0}}\sup_{\beta\in(0,1)}\left\{VaR_{\beta}(X)-\gamma_Z^X(\beta)\right\}.
    \end{align*}
    Furthermore, given that $Z\in\Gamma_{X}\implies f(Z)\leq0$ it follows that:
   \begin{align*}
       f(X)&=\min_{Z\in\Gamma_X}\sup_{\beta\in(0,1)}\left\{VaR_{\beta}(X)-\alpha(VaR_{\beta}(Z)-f(Z))\right\} \\ 
       &\leq \min_{Z\in\Gamma_X}\sup_{\beta\in(0,1)}\left\{VaR_{\beta}(X)-\alpha VaR_{\beta}(Z)\right\} \\
       &\leq \min_{\substack{Z\in\Gamma_X \\ f(Z)=0}}\sup_{\beta\in(0,1)}\left\{VaR_{\beta}(X)-\alpha VaR_{\beta}(Z)\right\}=f(X),
   \end{align*}
   hence
   \begin{equation*}
       f(X)=\min_{Z\in\Gamma_{X}}\sup_{\beta\in(0,1)}\left\{VaR_{\beta}(X)-\gamma_Z^X(\beta)\right\}.
   \end{equation*}

\textit{Properties of $\Gamma_X$:} Law-invariance of $\Gamma_X$ can be proved similarly as in Proposition~\ref{prop:VaRRo}. 
We now prove the inclusion $\lambda \Gamma_{X}\subseteq\Gamma_{\lambda X}$ for any $\lambda\in(0,1]$. 
Let $Z\in\lambda\Gamma_{X}$. 
Then there exists $Z'\in\Gamma_{X}$ such that $Z=\lambda Z'$ and $X=\alpha Z'+c$ with $f(Z')\leq 0$ and $(\alpha,c)\in[0,1]\times\R$. 
It holds that: 
$$\lambda X\sim_1\lambda (\alpha Z'+c)=\alpha Z+c',$$
where $c':=\lambda c\in\R$. 
In addition, we have $f(Z)=f(\lambda Z')\leq\lambda f(Z')\leq 0$, thus $Z\in\Gamma_{\lambda X}$. 
It is worth noting that the converse inclusion $\Gamma_{\lambda X}\subseteq\lambda\Gamma_{X}$ does not hold in general.
In addition, for $m\in\R$ and $X\in\mathcal{X}$ we have that $Z\in\Gamma_X\iff Z\in\Gamma_{X+m}$. 
Indeed, if $Z\in\Gamma_{X}$ then there exist $(\alpha,c)\in[0,1]\times\R$ such that $X\sim_1 \alpha Z+c$. 
Considering $c':=c+m$ we have $X+m\sim_1 \alpha Z +c'$, thus $Z\in\Gamma_{X+m}$.
So we showed that $\Gamma_X\subseteq\Gamma_{X+m}$; the converse inclusion can be proved analogously.
\smallskip

\textit{Properties of $\gamma^{X}_Z$:}
Let us fix $\lambda\in(0,1]$. 
By the inclusion $\lambda\Gamma_{X}\subseteq\Gamma_{\lambda X}$ for any $\lambda\in(0,1]$, it follows for any $Z\in\Gamma_{X}$ and $\beta\in(0,1)$ that:
\begin{equation*}
  \lambda\gamma^{X}_Z=\lambda \alpha VaR_{\beta}(Z)=\alpha VaR_{\beta}(\lambda Z)=\gamma^{\lambda X}_{\lambda Z},
    \end{equation*}
where last equality follows from $\lambda Z\in\lambda\Gamma_{X}\subseteq\Gamma_{\lambda X}$ with $\lambda X=\alpha \lambda Z + \lambda c$.
    The equality $\ds\max_{Z\in\Gamma_{0}}\gamma^{0}_Z(0^+)=0$ follows as in Proposition~\ref{prop:VaRRo}.
   
\noindent Finally, consider $m\in\R$. 
We have:
$$\gamma_{Z}^{X+m}=\alpha VaR(Z)=\gamma_{Z}^X,$$
where $\alpha\in[0,1]$ is the same in the representation $X\sim_1 \alpha Z+c$, whether $Z$ is regarded as an element of $\Gamma_{X}$ or as an element of $\Gamma_{X+m}$, as demonstrated earlier in the proof.
\smallskip

Now we are ready to prove the converse implication. 
We want to show that $$f(X)=\min_{Z\in\Gamma_{X}}\sup_{\beta\in(0,1)}\left\{VaR_{\beta}(X)-\gamma^X_Z(\beta)\right\}, \ X\in\mathcal{X},$$ is normalized, star-shaped, law-invariant and cash-additive. Normalization and law-invariance follows as in the proof of Proposition~\ref{prop:VaRRo}.
\smallskip

\textit{Star-shapedness:} Let us consider $\lambda\in(0,1]$. 
We have the following inequalities:
\begin{align*}
    f(\lambda X)&=\min_{Z\in\Gamma_{\lambda X}}\sup_{\beta\in(0,1)}\{VaR_{\beta}(\lambda X)-\gamma_Z^{\lambda X}(\beta)\}= \\
    &=\lambda\min_{Z\in\Gamma_{\lambda X}}\sup_{\beta\in(0,1)}\{VaR_{\beta}(X)-\frac{1}{\lambda}{\gamma_Z^{\lambda X}}(\beta)\} \\
    &= \lambda\min_{\lambda Z'\in\Gamma_{\lambda X}}\sup_{\beta\in(0,1)}\{VaR_{\beta}(X)-\frac{1}{\lambda}{\gamma_{\lambda Z'}^{\lambda X}}(\beta)\} \\ 
    &\leq\lambda\min_{\lambda Z'\in\lambda\Gamma_{X}}\sup_{\beta\in(0,1)}\{VaR_{\beta}(X)-\frac{1}{\lambda}{\gamma_{\lambda Z'}^{\lambda X}}(\beta)\}\\
    &=\lambda\min_{Z'\in\Gamma_X}\sup_{\beta\in(0,1)}\{VaR_{\beta}(X)-\gamma^X_{Z'}(\beta)\}=\lambda f(X),
\end{align*}
where the inequality follows from $\lambda \Gamma_{X}\subseteq \Gamma_{\lambda X},$ while $\gamma_{\lambda Z}^{\lambda X}=\lambda\gamma^{X}_{Z}$ for any $Z\in\Gamma_{X}$ leads to the last equality.
\smallskip

\textit{Cash-additivity:} Let $m\in\R$. 
We have:
\begin{align*}
f(X+m) &= \min_{Z\in\Gamma_{X+m}}\sup_{\beta\in(0,1)}\left\{VaR_{\beta}(X+m)-\gamma^{X+m}_Z(\beta)\right\} \\ 
&= \min_{Z\in\Gamma_{X+m}}\sup_{\beta\in(0,1)}\left\{VaR_{\beta}(X)-\gamma^{X+m}_Z(\beta)\right\}+m \\
&= \min_{Z\in\Gamma_{X}}\sup_{\beta\in(0,1)}\left\{VaR_{\beta}(X)-\gamma^X_Z(\beta)\right\}+m = f(X)+m,
\end{align*}
The last equality follows from $\Gamma_{X+m}=\Gamma_{X}$ and $\gamma_Z^{X+m}=\gamma_{Z}^X$.
\end{proof}
In the subsequent lemma, we introduce a method for representing law-invariant risk measures using Value-at-Risk as a key component. 
Subsequently, we tailor our findings to derive a novel representation that adheres to the axioms of star-shapedness and cash-additivity, as expounded in Proposition~\ref{cor:CAVaR1}.

\begin{lemma}
    Let $\rho:\mathcal{X}\to\R\cup\{+\infty\}$ be a normalized, law-invariant risk measure. 
    Then it admits the representation:
    $$\rho(X)=\min_{
        Z\in\Gamma_{X}}\rho\left(\sup_{\beta\in(0,1)}(VaR_{\beta}(X)-\gamma^X_Z(\beta))\right).$$
        Here,
\begin{equation*}
    \Gamma_X:=\left\{Z\in\mbox{Dom}(\rho): \exists Y\succeq_1 X \mbox{ s.t. } Y\sim_1\alpha Z, \ \alpha\in[0,1]\right\},
\end{equation*}
and for each fixed $X\in\mathcal{X}$ and $Z\in\Gamma_X$, we define $$\mathcal{A}_Z^{X}:=\{\alpha\in[0,1]: \exists Y\succeq_1 X \mbox{ s.t. } Y\sim_1\alpha Z\},$$ and the real function $\gamma^X_Z:(0,1)\to\R\cup\{-\infty\}:$
$$\gamma^X_Z(\beta):=\bar{\alpha}(VaR_{\beta}(Z)-Z),$$
where $\bar{\alpha}=\sup\mathcal{A}_Z^{X}$, with that convention that if $\Gamma_X=\emptyset$ then $\gamma_Z^X=-\infty.$
        \label{lem:monSSVaR}
\end{lemma}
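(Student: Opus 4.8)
The plan is to establish the identity by proving the two inequalities separately, exploiting that a monotone, law-invariant risk measure on a non-atomic space is automatically FSD-consistent. I would first record this consistency explicitly, since it is the engine of the whole argument: if $W\succeq_1 X$, i.e.\ $VaR_\beta(W)\geq VaR_\beta(X)$ for all $\beta$, then on the non-atomic space one can construct a copy $\hat X\sim X$ with $\hat X\leq W$ a.s.\ (pick a uniform $U$ with $W=VaR_U(W)$ a.s.\ and set $\hat X:=VaR_U(X)$); monotonicity gives $\rho(W)\geq\rho(\hat X)$ and law-invariance gives $\rho(\hat X)=\rho(X)$, so $\rho(W)\geq\rho(X)$. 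I may assume $X\in\mbox{Dom}(\rho)$, since otherwise $\rho(X)=+\infty$ and, as every admissible $Z$ produces a random variable dominating $X$ in first order, the right-hand side collapses to $+\infty$ as well (and to $+\infty$ by convention if $\Gamma_X=\emptyset$).

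The core computation is to identify the inner random variable. Reading $Z$ in $\gamma^X_Z(\beta)=\bar\alpha(VaR_\beta(Z)-Z)$ as the random variable itself, I have $VaR_\beta(X)-\gamma^X_Z(\beta)=\big(VaR_\beta(X)-\bar\alpha VaR_\beta(Z)\big)+\bar\alpha Z$, where only the bracketed term depends on $\beta$. Hence the supremum telescopes:
$$W_Z:=\sup_{\beta\in(0,1)}\big(VaR_\beta(X)-\gamma^X_Z(\beta)\big)=\bar\alpha Z+c_Z,\qquad c_Z:=\sup_{\beta\in(0,1)}\big(VaR_\beta(X)-\bar\alpha VaR_\beta(Z)\big).$$
To make this rigorous I would first check that $\bar\alpha$ is attained. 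Since the existence of $Y\succeq_1 X$ with $Y\sim_1\alpha Z$ reduces, on the non-atomic space, to the single quantile inequality $\alpha VaR_\beta(Z)\geq VaR_\beta(X)$ for all $\beta$ (because $\alpha VaR_\beta(Z)$ is a legitimate, non-decreasing quantile function for $\alpha\geq 0$), one gets $\mathcal{A}_Z^X=\{\alpha\in[0,1]:\alpha VaR_\beta(Z)\geq VaR_\beta(X)\ \forall\beta\}$, an intersection of closed conditions in $\alpha$, hence closed; so $\bar\alpha=\max\mathcal{A}_Z^X$. In particular $\bar\alpha VaR_\beta(Z)\geq VaR_\beta(X)$ for every $\beta$, which makes each term defining $c_Z$ nonpositive, so $c_Z\in(-\infty,0]$ is finite and $W_Z=\bar\alpha Z+c_Z\in\mathcal{X}$.

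For the lower bound I would show $W_Z\succeq_1 X$ for every $Z\in\Gamma_X$. By positive homogeneity and cash-additivity of $VaR$, $VaR_\beta(W_Z)=\bar\alpha VaR_\beta(Z)+c_Z$, while by definition of the supremum $c_Z\geq VaR_\beta(X)-\bar\alpha VaR_\beta(Z)$ for each $\beta$; combining, $VaR_\beta(W_Z)\geq VaR_\beta(X)$ for all $\beta$, i.e.\ $W_Z\succeq_1 X$. FSD-consistency then gives $\rho(W_Z)\geq\rho(X)$, whence $\min_{Z\in\Gamma_X}\rho(W_Z)\geq\rho(X)$. For the matching upper bound I would take $Z=X$, which lies in $\Gamma_X$ (via $\alpha=1$, $Y=X$) and satisfies $\bar\alpha=\max\mathcal{A}_X^X=1$ and $c_X=\sup_\beta(VaR_\beta(X)-VaR_\beta(X))=0$, so that $W_X=X$ and $\rho(W_X)=\rho(X)$. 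Thus the minimum is attained at $Z=X$ and equals $\rho(X)$.

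The main obstacle is less any single step than correctly pinning down the object $\gamma^X_Z$: the clean telescoping depends on the fact that the $\bar\alpha Z$ contribution is constant in $\beta$ and so factors out of the supremum, and the crucial identity $W_X=X$ at $Z=X$ hinges on interpreting $Z$ in $\bar\alpha(VaR_\beta(Z)-Z)$ as the random variable rather than as a scalar. Once this is fixed, the only genuinely non-formal inputs are the passage from monotonicity-plus-law-invariance to FSD-consistency (the comonotone coupling on the non-atomic space) and the attainment of $\bar\alpha=\max\mathcal{A}_Z^X$ guaranteeing finiteness of $c_Z$; both are standard but must be stated explicitly.
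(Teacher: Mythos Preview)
Your proof is correct and follows essentially the same approach as the paper: identify the inner random variable as $W_Z=\bar\alpha Z+c_Z$, verify $W_Z\succeq_1 X$ so that FSD-consistency yields $\rho(W_Z)\geq\rho(X)$, and take $Z=X$ with $\bar\alpha=1$, $c_X=0$ to attain the minimum. The only minor difference is that the paper invokes an external representation from \cite{HWWX22} (namely $\rho(X)=\min_{W\succeq_1 X}\rho(W+\sup_\beta(VaR_\beta(X)-VaR_\beta(W)))$) and then specializes to $W=\bar\alpha Z$, whereas you prove both inequalities directly from FSD-consistency, making your argument slightly more self-contained; your closed-set argument for $\bar\alpha=\max\mathcal{A}_Z^X$ is also a clean alternative to the paper's sequential limit.
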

\begin{proof}
We observe that $\gamma_Z^X$ is well-defined for any $X\in\mathcal{X}$ and $Z\in\mbox{Dom}(\rho)$, as we consider $\bar\alpha=\sup\mathcal{A}_Z^X$, which is clearly unique. Proceeding analogously as in the proof of Proposition~A.4 in \cite{HWWX22}, it can be verified that any law-invariant risk measure can be represented as:
    $$\rho(X)=\min_{\substack{W\in\mathcal{X} \\ W\succeq_1 X}}\rho(W+\sup_{\beta\in(0,1)}(VaR_{\beta}(X)-VaR_{\beta}(W)).$$
    We fix $X\in\mbox{Dom}(\rho)$, otherwise the statements are trivial. 
    If $Z\in\Gamma_{X}$, then for any $\alpha\in\mathcal{A}_Z^{X}$ it holds that $\alpha Z\succeq_1 X$, implying that also $\bar{Y}:=\bar\alpha Z$ verifies $\bar\alpha Z\succeq_1 X$. 
    Indeed, by the properties of the supremum, there exists a sequence $(\alpha_n)_{n\in\mathbb{N}}\subseteq\mathcal{A}_Z^{X}$ such that $\alpha_n\to\bar\alpha$. 
    Thus, for any $n\in\mathbb{N}$ it results that $Y_n\succeq_1 X$ and $Y_n\sim_1\alpha_n Z$. 
    Letting $n\to\infty$, we have that $\bar{Y}:=\bar\alpha Z$ clearly satisfies $\bar{Y}\sim_1\bar\alpha Z$ and $\bar{Y}\succeq_1 X$, so the supremum is indeed a maximum. 
    Hence, for any $Z\in\Gamma_X$ we have
    \begin{equation}
    \rho(X)\leq \rho(\bar\alpha Z+\sup_{\beta\in(0,1)}(VaR_{\beta}(X)-\bar\alpha VaR_{\beta}(Z)).
    \label{eq:MinZ}
    \end{equation}
    Taking the minimum over $Z\in\Gamma_X$ on both members of Equation~\eqref{eq:MinZ} we have:
    \begin{align*}
    \rho(X)&=\min_{\substack{W\in\mathcal{X} \\ W\succeq_1 X}}\rho(W+\sup_{\beta\in(0,1)}(VaR_{\beta}(X)-VaR_{\beta}(W)) \\
    &\leq\min_{Z\in\Gamma_{X}}\rho(\bar\alpha Z+\sup_{\beta\in(0,1)}(VaR_{\beta}(X)-\bar\alpha VaR_{\beta}(Z)) \\
    &=\min_{Z\in\Gamma_{X}}\rho(\sup_{\beta\in(0,1)}(VaR_{\beta}(X)-\gamma^X_Z(\beta))\leq\rho(X),
    \end{align*}
    where the last inequality follows by taking $Z=X$, given that $\bar{\alpha}=1$ attains the supremum of the set $\mathcal{A}^X_X$. 
\end{proof}

The following proposition demonstrates that our results genuinely extend those obtained in Theorem~5 of~\cite{CCMTW22}. 
Not only do we broaden the setting to encompass the general space $\mathcal{X}$, but we also establish that the minimum can be taken over a set strictly smaller than the acceptance set of $\rho$, denoted as $\mathcal{B}{\rho}:=\{Z\in\mathcal{X}:\rho(Z)\leq0\}$.
More specifically, the corollary underscores that the minimum can be computed over a set $\Gamma_X\subseteq \mathcal{B}{\rho}$, which is contingent on the choice of $X\in\mathcal{X}$. 
\begin{proposition}
\label{cor:CAVaR1}
    A functional $\rho:\mathcal{X}\to\R\cup\{+\infty\}$ is a normalized, law-invariant, star-shaped and cash-additive risk measure if and only it admits the representation:
    $$\rho(X)=\min_{Z\in\Gamma_{X}}\sup_{\beta\in(0,1)}\{VaR_{\beta}(X)-\gamma^X_Z(\beta)\},$$
    where for each $X\in\mathcal{X}$, $\Gamma_{X}$ is a set of random variables such that $\lambda\Gamma_{X}\subseteq\Gamma_{\lambda X}$ for any $\lambda\in(0,1]$, and $\Gamma_{X+m}=\Gamma_{X}$ for all $m\in\R$. 
    In addition, $\gamma_Z^X:(0,1)\to\R\cup\{-\infty\}$ is an increasing function such that $\ds\max_{Z\in\Gamma_{0}}\, \gamma_Z^{0}(0^+) =0$. 
    Moreover, for any $Z\in\Gamma_X, \ \lambda \in(0,1]$ it holds that $\lambda\gamma_{Z}^{X}=\gamma_{\lambda Z}^{\lambda X}$ and for any $m\in\R$ it results that $\gamma_Z^{X+m}=\gamma_Z^{X}.$ 

\noindent Furthermore, for each $X\in\mathcal{X}$, the set $\Gamma_X$ takes the following form:
\begin{equation*}
    \Gamma_X:=\left\{Z\in\mbox{Dom}(\rho): \exists Y\succeq_1X, Y\sim_1\alpha Z+c, \ (\alpha,c)\in[0,1]\times\R, \ \rho(Z)\leq 0\right\},
\end{equation*}
and the function $\gamma^{X}_Z:(0,1)\to\R\cup\{-\infty\}$ is defined by the formula:
$$\gamma^X_Z(\beta):=\bar\alpha VaR_{\beta}(Z), \ \ Z\in\Gamma_X,$$
with that convention that if $\Gamma_X=\emptyset$ then $\gamma_Z^X=-\infty.$ 
Here, $\bar{\alpha}:=\sup\mathcal{A}_Z^{X}$ where 
$$\mathcal{A}_Z^{X}:=\{\alpha\in[0,1]: \exists Y\succeq_1 X, c\in\R \mbox{ s.t. } \ Y\sim_1\alpha Z+c\}.$$
\end{proposition}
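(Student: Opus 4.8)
The plan is to treat the star-shaped case in detail (the positively homogeneous case being entirely analogous, with $\bar\alpha$ ranging over $[0,+\infty)$ rather than $[0,1]$), to assume throughout that $X\in\mbox{Dom}(\rho)$ (otherwise $\Gamma_X=\emptyset$ and both sides are trivial), and to build the `only if' direction on top of Lemma~\ref{lem:monSSVaR} rather than from scratch. Since $\rho$ is a law-invariant risk measure it is FSD-consistent: on the non-atomic space $X'\succeq_1 X$ admits a monotone coupling, so monotonicity and law-invariance together give $\rho(X')\ge\rho(X)$. Lemma~\ref{lem:monSSVaR} then represents $\rho$ as a minimum over $Z\in\mbox{Dom}(\rho)$ of $\rho$ evaluated at a $VaR$-robustification. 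The whole point of the proposition is to collapse this outer $\rho$ into the identity and to shrink the index set to $\Gamma_X\subseteq\mathcal{B}_\rho$, using cash-additivity and star-shapedness.

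For the min-sup identity I would argue two inequalities. For `$\le$' I would exhibit one admissible $Z$ attaining $\rho(X)$: taking $Z:=X-\rho(X)$ gives $\rho(Z)=0$ by cash-additivity and normalization, and $Y=X$ with $(\alpha,c)=(1,\rho(X))$ witnesses $Z\in\Gamma_X$ with $\bar\alpha=\sup\mathcal{A}^X_Z=1$; then $\gamma^X_Z(\beta)=VaR_\beta(X)-\rho(X)$ by cash-additivity of $VaR$, so $\sup_{\beta}\{VaR_\beta(X)-\gamma^X_Z(\beta)\}=\rho(X)$ identically in $\beta$. For `$\ge$' I would fix any $Z\in\Gamma_X$ and set $c^\ast:=\sup_{\beta}\{VaR_\beta(X)-\bar\alpha VaR_\beta(Z)\}$; then $VaR_\beta(X)\le VaR_\beta(\bar\alpha Z+c^\ast)$ for all $\beta$, that is $\bar\alpha Z+c^\ast\succeq_1 X$. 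FSD-consistency, cash-additivity, star-shapedness and normalization then yield $\rho(X)\le\rho(\bar\alpha Z+c^\ast)=\rho(\bar\alpha Z)+c^\ast\le\bar\alpha\rho(Z)+c^\ast\le c^\ast$, where the last step uses precisely the defining constraint $\rho(Z)\le 0$ of $\Gamma_X$. Taking the minimum over $Z\in\Gamma_X$ gives the representation.

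I would then verify the structural properties of $(\Gamma_X,\gamma^X_Z)$, all of which reduce to positive homogeneity and cash-additivity of $VaR$ plus FSD bookkeeping: law-invariance of $\Gamma_X$ from $X\sim X'\Rightarrow(Y\succeq_1 X\Leftrightarrow Y\succeq_1 X')$; $\lambda\Gamma_X\subseteq\Gamma_{\lambda X}$ from $Y\succeq_1 X\Rightarrow\lambda Y\succeq_1\lambda X$ together with $\rho(\lambda Z)\le\lambda\rho(Z)$; $\Gamma_{X+m}=\Gamma_X$ from the cash-invariance of $\succeq_1$; monotonicity of $\gamma^X_Z$ in $\beta$ from that of $VaR_\beta$; and the identities $\lambda\gamma^X_Z=\gamma^{\lambda X}_{\lambda Z}$ and $\gamma^{X+m}_Z=\gamma^X_Z$ from the equalities $\mathcal{A}^{\lambda X}_{\lambda Z}=\mathcal{A}^X_Z=\mathcal{A}^{X+m}_Z$ (so $\bar\alpha$ is left unchanged) and the homogeneity and cash-invariance of $VaR_\beta(Z)$. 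The normalization condition $\max_{Z\in\Gamma_0}\gamma^0_Z(0^+)=0$ follows by evaluating the representation at $X=0$ exactly as in Proposition~\ref{prop:VaRRo}.

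For the converse, normalization, law-invariance, cash-additivity and star-shapedness follow verbatim from the converse in Corollary~\ref{cor:varnonmonCA}, since the scaling and shift identities assumed for $(\Gamma_X,\gamma^X_Z)$ match those used there. The genuinely new point, and the step I expect to be the main obstacle, is monotonicity: because the penalties $\gamma^X_Z$ depend on $X$, one cannot simply invoke the FSD-consistency of $X\mapsto\sup_\beta\{VaR_\beta(X)-\gamma(\beta)\}$ for a fixed penalty. I would instead exploit the FSD-domination built into the specific form: if $X\le X'$ then $X'\succeq_1 X$, whence $\Gamma_{X'}\subseteq\Gamma_X$ (any $Y\succeq_1 X'$ also dominates $X$) and $\bar\alpha_Z(X')\le\bar\alpha_Z(X)$ for each common $Z$; evaluating $\rho(X)$ at a (near-)minimizer $Z^\ast\in\Gamma_{X'}$ and comparing termwise with $\rho(X')$, using $VaR_\beta(X)\le VaR_\beta(X')$, should deliver $\rho(X)\le\rho(X')$. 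The delicate part is controlling the sign of $VaR_\beta(Z^\ast)$ in this comparison, and it is here that the acceptance-set constraint $\rho(Z)\le0$ and the maximality of $\bar\alpha$ must be brought to bear; this is the crux of the argument and the part I would write out most carefully.
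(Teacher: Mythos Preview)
Your approach is essentially the same as the paper's. For the `only if' direction both of you exhibit $Z=X-\rho(X)$ (with $\bar\alpha=1$) as the minimizer, and both derive the opposite inequality from star-shapedness, cash-additivity, and the acceptance constraint $\rho(Z)\le 0$. Your route via $c^\ast:=\sup_\beta\{VaR_\beta(X)-\bar\alpha\, VaR_\beta(Z)\}$ and the observation $\bar\alpha Z+c^\ast\succeq_1 X$ is a shade more direct than the paper's, which first goes through Lemma~\ref{lem:monSSVaR} and an intermediate $\rho$-evaluation before reaching the same point. The structural properties of $(\Gamma_X,\gamma_Z^X)$ and most of the converse are, in both treatments, referred back to Proposition~\ref{prop:VaRRo} and Corollary~\ref{cor:varnonmonCA}.

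On monotonicity in the converse you are actually being more careful than the paper. The paper disposes of it in a single line --- ``obvious once observed that $VaR_\beta$ is monotone and $\Gamma_{X_1}\subseteq\Gamma_{X_2}$ as soon as $X_1\ge X_2$'' --- without addressing the $X$-dependence of $\gamma_Z^X=\bar\alpha\,VaR_\beta(Z)$ through $\bar\alpha=\sup\mathcal{A}_Z^X$ that you correctly isolate. Your worry is legitimate: the termwise comparison you sketch does indeed fail when $VaR_\beta(Z^\ast)$ changes sign, and your proposal stops short of closing this gap. So on this particular point your proof has the same lacuna as the paper's; the difference is that you identify it explicitly, whereas the paper does not.
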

\begin{proof}
We start by proving the `only if' implication. 
We assume $X\in\mbox{Dom}(\rho)$, otherwise the statements are trivial. 
For every $X\in\mathcal{X}$ and $Z\in\Gamma_X$, the function $\gamma^X_Z$ is well-defined, similarly to what occurs in Lemma \ref{lem:monSSVaR}.    Furthermore, we observe that $\sup\mathcal{A}_Z^X=\max\mathcal{A}_Z^X$, when $\mathcal{A}_Z^X\neq\emptyset$. 
    It is easy to verify that if $Z\in\Gamma_{X}$ is non-constant and $Y$ is non-constant, there exists a unique $(\alpha,c)\in[0,1]\times\R$ such that $Y\sim_1 \alpha Z+c$ (see the proof of Proposition~\ref{cor:CA}).
Thus, if $\mathcal{A}_Z^X$ is non-empty, it consists of a single element, and the supremum is reached. When $Y$ is constant and $Z$ is non-constant, the only admissible choice is $(\alpha,c)=(0,Y)$, and the supremum is attained. Finally, if $Z$ is constant, then $Y$ must also be constant, and we can choose $(\bar\alpha,\bar c)=(1,Y-Z)$. 
In this case, $\bar\alpha=1$ is an element of $\mathcal{A}_Z^X$ which reaches the supremum, as $\mathcal{A}_Z^X\subseteq[0,1]$. \smallskip

\textit{Min-sup representation:} First, we note that fixing $X\in\mbox{Dom}(\rho)$ and $Z\in\Gamma_X$ and defining $$X\mapsto \tilde{\gamma}_{Z}(X):=\sup_{\beta\in(0,1)}\{(VaR_{\beta}(X)-VaR_{\beta}(\bar\alpha Z))+\bar\alpha\rho(Z))\},$$ with $\beta\in(0,1)$, it holds that:
\begin{equation*}
    \tilde{\gamma}_Z(X)\geq\rho(\bar\alpha Z+\sup_{\beta\in(0,1)}\{VaR_{\beta}(X)-VaR_{\beta}(\bar\alpha Z)\}).
\end{equation*}
Here, the inequality is due to star-shapedness and cash-additivity of $\rho$.
We have $$\rho(\bar\alpha Z+\sup_{\beta\in(0,1)}\{VaR_{\beta}(X)-VaR_{\beta}(\bar\alpha Z)\})=\rho(\bar\alpha Z+\bar c+\sup_{\beta\in(0,1)}\{VaR_{\beta}(X)-VaR_{\beta}(\bar\alpha Z+\bar c)\}),$$ 
where $\bar c\in\R$ is the element such that $Y\sim_1 \bar\alpha Z+\bar c$. 
Such $\bar c\in\R$ exists since $\bar\alpha\in\mathcal{A}_Z^X$, as we proved in the first part of the proof. 
Thus, $\bar\alpha Z+\bar c\succeq_1 X$, resulting in:
\begin{align*}
\tilde{\gamma}_{Z}(X) & \geq \rho\left(\bar\alpha Z+\bar c+\sup_{\beta\in(0,1)}{VaR_{\beta}(X)-VaR_{\beta}(\bar\alpha Z+\bar c)}\right) \\
& \geq \min_{\substack{W\in\mathcal{X} \\ W\succeq_1 X}}\rho\left(W+\sup_{\beta\in(0,1)}(VaR_{\beta}(X)-VaR_{\beta}(W))\right) = \rho(X),
\end{align*}
where the last equality follows from the proof of Lemma~\ref{lem:monSSVaR}. 
Hence, $\ds\min_{Z\in\Gamma_X}\tilde{\gamma}_Z(X)\geq \rho(X)$. 
Since cash-additivity yields $\rho(X-\rho(X))=0$ it holds that $\tilde{Z}:=X-\rho(X)\in\Gamma_X$ with $(\bar\alpha,\bar c)=(1,\rho(X))$. 
In particular, we have $\tilde{\gamma}_{\tilde Z}(X)=\rho(X)$, leading to:
  \begin{align*}
        \rho(X)&=\min_{Z\in\Gamma_X}\sup_{\beta\in(0,1)}\left\{VaR_{\beta}(X)-\bar\alpha(VaR_{\beta}(Z)-\rho(Z))\right\} \\ 
        &=\min_{\substack{Z\in\Gamma_X \\ \rho(Z)=0}}\sup_{\beta\in(0,1)}\left\{VaR_{\beta}(X)-\bar\alpha VaR_{\beta}(Z)\right\} \\
        &=\min_{\substack{Z\in\Gamma_X \\ \rho(Z)=0}}\sup_{\beta\in(0,1)}\left\{VaR_{\beta}(X)-\gamma_Z^X(\beta)\right\}.
    \end{align*}
    Furthermore, given that $Z\in\Gamma_{X}\implies \rho(Z)\leq0$ it follows that:
   \begin{align*}
       \rho(X)&=\min_{Z\in\Gamma_X}\sup_{\beta\in(0,1)}\left\{VaR_{\beta}(X)-\bar\alpha(VaR_{\beta}(Z)-\rho(Z))\right\} \\ 
       &\leq \min_{Z\in\Gamma_X}\sup_{\beta\in(0,1)}\left\{VaR_{\beta}(X)-\bar\alpha VaR_{\beta}(Z)\right\} \\
       &\leq \min_{\substack{Z\in\Gamma_X \\ \rho(Z)=0}}\sup_{\beta\in(0,1)}\left\{VaR_{\beta}(X)-\bar\alpha VaR_{\beta}(Z)\right\}=\rho(X),
   \end{align*}
   hence
   \begin{equation*}
       \rho(Z)=\min_{Z\in\Gamma_{X}}\sup_{\beta\in(0,1)}\left\{VaR_{\beta}(X)-\gamma_Z^X(\beta)\right\}.
   \end{equation*}

\noindent The properties of $\Gamma_X$ properties of $\gamma_Z^{X}$  can be verified as in the proof of Corollary~\ref{cor:varnonmonCA}. \smallskip

Now we are ready to prove the converse implication. 
We want to show that $$\rho(X)=\min_{Z\in\Gamma_{X}}\sup_{\beta\in(0,1)}\left\{VaR_{\beta}(X)-\gamma^X_Z(\beta)\right\}, \ X\in\mathcal{X},$$ is monotone, normalized, star-shaped, law-invariant and cash-additive. 
Normalization and star-shapedness can be established following the proof of Proposition~\ref{prop:VaRRo}, while the proof of cash-additivity closely mirrors the proof provided in Corollary~\ref{cor:varnonmonCA}.
\smallskip

\textit{Monotonicity:} This property is obvious once observed that $VaR_{\beta}$ is monotone and $\Gamma_{X_1}\subseteq\Gamma_{X_2}$ as soon as $X_1\geq X_2$. \smallskip

\textit{Law-invariance:} Let us consider $X,X'\in\mathcal{X}$ such that $X\sim X'$. 
We need to verify that $\Gamma_{X}=\Gamma_{X'}$. 
Given that $X\sim X'$, both $X\succeq_1 X'$ and $X'\succeq_1 X$ apply, resulting in $\Gamma_{X}\subseteq\Gamma_{X'}$ and $\Gamma_{X'}\subseteq\Gamma_{X}$, which in turns lead to $\Gamma_{X}=\Gamma_{X'}$. This equality, along with the law-invariance of $VaR_{\beta}$, implies that $\rho$ is law-invariant. 
\end{proof}

\section{Illustrative Examples}\label{sec:ex}
While the prime motivation for this paper comes directly from \cite{BLR18,BKMS21,LR22,LRZ23,ABL23}, whence the aim of establishing novel characterization results for law-invariant return and star-shaped risk measures, we show in this section that such risk measures may also arise naturally from more classical settings. 
Indeed, we provide three examples to illustrate the inherent star-shaped nature of certain risk measures, also when cash-additivity or cash-subadditivity are not preserved.
Specifically, the examples show that during the recent extended period of negative interest rates induced by central banks' monetary policies, investors may naturally comply with non-cash-(sub)additive risk measures that remain star-shaped or even positively homogeneous. 
\begin{example}
As shown in \cite{ELKR09}, in the absence of a zero coupon bond, the presence of ambiguity with respect to interest rates naturally leads to a relaxation of the axiom of cash-additivity, with cash-subadditivity being assumed instead. 
Over the last decade, the EONIA index, which tracks unsecured lending transactions in the interbank market, has assumed negative values. 
Consider a bank, or more generally, a financial institution, assessing the risk of an asset $X_T$ at the present time $t=0$, where $T>0$ denotes the asset's time to maturity. 
Suppose the institution evaluates the risk of $X_T$ using a spot risk measure $\rho_0$ defined on the discounted price $D_TX_T$, where $D_T$ represents a discount factor (for more details, see Section~2.4 in \cite{ELKR09}). 
In the context of negative interest rates, the discount factor $D_T$ can take values greater than 1. 
Therefore, if the interest rate is subject to ambiguity, fluctuating between two constants $0 \leq D_b \leq D_u \leq C$, with $C > 1$, an ambiguity-adverse investor may select the risk measure as follows:
    $$\tilde{\rho}(X_T)=\sup_{D_T\in\mathcal{X}}\{\rho_0(D_TX_T): D_b\leq D_T\leq D_u\}.$$
    Even when $\rho_0$ is a monetary risk measure, $\tilde{\rho}$ can be non-cash-(sub)additive, given that $D_T$ can take values greater than $1$.
    
    A classic industry measure of risk is the Value-at-Risk. 
    Therefore, we can consider $\rho_0 = VaR_{\beta}$. 
    In this setting, the resulting risk measure is given by:
    $$\tilde{\rho}(X_T)=\sup_{D_T\in\mathcal{X}}\{VaR_{\beta}(D_TX_T): D_b\leq D_T\leq D_u\}.$$ 
    This risk measure is neither convex nor cash-(sub)additive, but it is positively homogeneous (thus star-shaped) and law-invariant, inheriting these properties from $VaR_{\beta}$. 
    In particular, the thesis of Proposition~\ref{prop:VaRRo} concerning positively homogeneous functionals can be applied to represent $\tilde{\rho}$ as in Equation~\eqref{eq:VaRSS}. 
    \label{ex:VaR1}
\end{example}

\begin{example}
In the same context as in Example~\ref{ex:VaR1}, we consider a generalization of $VaR_{\beta}$ that is closely connected to the concept of $\Lambda$-VaR, as shown in Theorem~3.1 of \cite{HWWX22}.  
For a thorough discussion of $\Lambda$-VaR risk measures, we refer to \cite{FMP13}.
Let us fix $x\in\mathbb{R}$ and define:
\begin{align*}
    \tilde{\rho}(X_T)&=\sup_{D_T\in\mathcal{X}}\{VaR_{\beta}(D_TX_T)\vee x: D_b\leq D_T\leq D_u\} \\ &=\sup_{D_T\in\mathcal{X}}\{VaR_{\beta}(D_TX_T): D_b\leq D_T\leq D_u\}\vee x.
\end{align*}
Since the maximum operation preserves star-shapedness, and both $VaR_{\beta}$ and the trivial risk measure $\rho(X)=x$ for any $X\in\mathcal{X}$ are star-shaped, the resulting risk measure $\tilde{\rho}$ is law-invariant and (genuinely) star-shaped. 
However, in this case, positive homogeneity does not hold in general. 
Once again, $\tilde{\rho}$ is neither convex nor cash-(sub)additive. 
The parameter $x$ can be interpreted as follows: it represents a barrier below which the risk evaluation corresponding to the asset $X_T$ cannot fall, possibly due to market frictions or other constraints.
Consequently, the investor is required to retain the larger of two amounts --- the assessment of the risk linked to the asset $X_T$ as determined by the Value-at-Risk and the minimum threshold $x \in \mathbb{R}$. 
The fixed amount $x$ remains constant and is unrelated to the value of $X_T$, but instead may be influenced by external factors. 
In this case, it is also possible to derive a representation of $\tilde{\rho}$ akin to the one presented in Equation~\eqref{eq:VaRSS}. 
It should be noted that $\tilde{\rho}$ is not normalized. 
However, the representation results remain valid, as stated at the outset of Section~\ref{sec:vr}.
\end{example}

\begin{example}
    Consider a firm that wishes to establish an insurance contract to cover the risk associated with an asset $X_T$. 
    Let $A$ represent the set of insurance companies accessible to the firm. 
    Suppose each insurer has its own spot monetary, convex, and normalized risk measure $\rho_0^a$ with $a\in A$ to assess the risk of $X_T$. 
    Furthermore, suppose the cost of each contract is equal to the risk assessment made by the insurer through its risk measure. 
    Thus, for each asset $X_T$, the firm will choose to pay the minimum amount:
    $$\rho(X_T):=\min_{a\in A}\rho_0^a(D_TX_T).$$ 
    Here, the resulting risk measure is star-shaped but not convex in general. 
    Furthermore, as observed in the examples above, $\rho$ lacks cash-(sub)additivity. 
    If the risk measures employed by the insurance companies are SSD-consistent, such as in the case of Expected Shortfall, entropic risk measures, and risk measures generated from power or exponential utilities, the resulting risk measure $\rho$ inherits this property. 
    Under this circumstance, if $\mathcal{X}$ is a rearrangement-invariant space, we can represent $\rho$ as in Equation~\eqref{eq:SDD_RS}.
\end{example}
These three examples underscore the significance of establishing a comprehensive framework for law-invariant (or SSD-consistent) star-shaped risk measures, which remains applicable also when cash-(sub)additivity is not preserved.


\begin{thebibliography}{99}

\bibitem{A07} \textsc{Acciaio, B.} (2007). 
Optimal risk sharing with non-monotone monetary functionals.
\textit{Finance and Stochastics} 11, 267-289.

\bibitem{ABL23} \textsc{Ayg\"un, M., F. Bellini and R.~J.~A. Laeven} (2023). 
Elicitability of return risk measures.
\textit{Preprint on arXiv:2302.13070v2}.

\bibitem{BKMS21} \textsc{Bellini F., P. Koch-Medina, C. Munari and G. Svindland} (2021).
Law-invariant functionals on general spaces of random variables.
\textit{SIAM Journal on Financial Mathematics} 12, 318-341.

\bibitem{BLR18} \textsc{Bellini, F., R.~J.~A. Laeven and E. Rosazza Gianin} (2018).
Robust return risk measures. 
\textit{Mathematics and Financial Economics} 12, 5-32.

\bibitem{BLR21} \textsc{Bellini F., R.~J.~A. Laeven and E. Rosazza Gianin} (2021).
Dynamic robust Orlicz premia and Haezendonk-Goovaerts risk measures. 
\textit{European Journal of Operational Research} 12, 5-32.

\bibitem{BF08} \textsc{Biagini, S. and M. Frittelli} (2008). 
A unified framework for utility maximization problems: {A}n Orlicz space approach.
\textit{The Annals of Applied Probability} 18, 929-966.

\bibitem{BF09} \textsc{Biagini, S. and M. Frittelli} (2009). 
On the extension of the Namioka-Klee Theorem and on the Fatou property for risk measures. 
\textit{Optimality and Risk---Modern Trends in Mathematical Finance}, 11-28.

\bibitem{CCMTW22} \textsc{Castagnoli, E., G. Cattelan, F. Maccheroni, C. Tebaldi and R. Wang} (2022). 
Star-shaped risk measures. 
\textit{Operations Research} 70, 2637-2654.

\bibitem{CMMM11} \textsc{Cerreia-Vioglio, S., F. Maccheroni, M. Marinacci and L. Montrucchio} (2011).
Risk measures: {R}ationality and diversification.
\textit{Mathematical Finance} 21, 743-774.

\bibitem{CL09} \textsc{Cheridito, P. and T. Li} (2009).
Risk measures on Orlicz hearts.
\textit{Mathematical Finance} 19, 189-214.

\bibitem{CR71} \textsc{Chong, K.~M. and N.~M. Rice} (1971). 
Equimeasurable rearrangements of functions. 
\textit{Queen’s Papers in Pure and Applied Mathematics} 28, 1-177.

\bibitem{D02} \textsc{Delbaen, F.} (2002). 
Coherent risk measures on general probability spaces. 
\textit{Essays in Honour of Dieter Sondermann}, 
Springer, Berlin, Heidelberg.

\bibitem{D06} \textsc{Delbaen, F.} (2006). 
The structure of $m$-stable sets and in particular of the set of risk neutral measures. 
In: Émery, M. and M. Yor (eds.). \textit{In Memoriam Paul-André Meyer}, \textit{Lecture Notes in Mathematics} 1874, 215-258,
Springer, Berlin, Heidelberg. 

\bibitem{D12} \textsc{Delbaen, F.} (2012). 
\textit{Monetary Utility Functions}.
Osaka University Press, Osaka.

\bibitem{DK13} \textsc{Drapeau, S. and M. Kupper} (2013). 
Risk preferences and their robust representation. 
\textit{Mathematics of Operations Research} 38, 28-62.

\bibitem{ELKR09} \textsc{El Karoui, N. and C. Ravanelli} (2009). 
Cash sub-additive risk measures and interest rate ambiguity. 
\textit{Mathematical Finance} 19, 561-590.

\bibitem{FS02} \textsc{F\"ollmer, H. and A. Schied} (2002). 
Convex measures of risk and trading constraints. 
\textit{Finance and Stochastics} 6, 429-447.

\bibitem{FS11} \textsc{F\"ollmer, H. and A. Schied} (2011). 
\textit{Stochastic Finance}.
3rd ed., De Gruyter, Berlin.

\bibitem{FMP13} \textsc{Frittelli, M., M. Maggis and I. Peri} (2013). 
Risk measures on $\mathcal{P}(\R)$ and value at risk with probability/loss function. 
\textit{Mathematical Finance} 24, 442-463.

\bibitem{FR02} \textsc{Frittelli, M. and E. Rosazza Gianin} (2002). 
Putting order in risk measures. 
\textit{Journal of Banking and Finance} 26, 1473-1486.

\bibitem{FR05} \textsc{Frittelli, M. and E. Rosazza Gianin} (2005). 
Law-invariant convex risk measures.
\textit{Advances in Mathematical Economics} (pp. 33-46). Springer, Tokyo.

\bibitem{GdVH84} \textsc{Goovaerts, M.~J., F.~E.~C. De Vylder and J. Haezendonck} (1984).
\textit{Insurance Premiums}.
North-Holland Publishing, Amsterdam.

\bibitem{HG82} \textsc{Haezendonck, J. and M.~J. Goovaerts} (1982). 
A new premium calculation principle based on Orlicz norms. 
\textit{Insurance: Mathematics and Economics} 1, 41-53.

\bibitem{HWWX22} \textsc{Han X., Q. Wang, R. Wang and J. Xia} (2022). 
Cash-subadditive risk measures without quasi-convexity.
\textit{Preprint on arXiv:2110.12198v4}.

\bibitem{K01} \textsc{Kusuoka, S.} (2001).
On law-invariant coherent risk measures.
\textit{Advances in Mathematical Economics} (pp. 83-95). Springer, Tokyo.

\bibitem{LR22} \textsc{Laeven, R.~J.~A. and E. Rosazza Gianin} (2022). 
Quasi-logconvex measures of risk. 
\textit{Preprint on arXiv:2208.07694v1}.

\bibitem{LRZ23} \textsc{Laeven, R.~J.~A., E. Rosazza Gianin and M. Zullino} (2023). 
Dynamic return and star-shaped risk measures via BSDEs.
\textit{Preprint on	arXiv:2307.03447v2}.

\bibitem{LS13} \textsc{Laeven, R.~J.~A. and M.~A. Stadje} (2013). 
Entropy coherent and entropy convex measures of risk. 
\textit{Mathematics of Operations Research} 38, 265-293.

\bibitem{LS14} \textsc{Laeven, R.~J.~A. and M.~A. Stadje} (2014). 
Robust portfolio choice and indifference valuation.  
\textit{Mathematics of Operations Research} 39, 1109-1141.

\bibitem{MMR06} \textsc{Maccheroni, F., M. Marinacci and A. Rustichini} (2006). 
Ambiguity aversion, robustness, and the variational representation of preferences. 
\textit{Econometrica} 74, 1447-1498.

\bibitem{MS92} \textsc{Machina, M.~J. and D. Schmeidler} (1992). 
A more robust definition of subjective probability. 
\textit{Econometrica} 60, 745-780.

\bibitem{M02} \textsc{Marinacci, M.} (2002). 
Probabilistic sophistication and multiple priors. 
\textit{Econometrica} 70, 755-764.

\bibitem{PR07} \textsc{Pflug, G.~Ch., and W. R\"omisch} (2007).
\textit{Modeling, Measuring and Managing Risk}.
World Scientific, Singapore.

\bibitem{RS14} \textsc{Ravanelli, C. and G. Svindland} (2014).
Comonotone Pareto optimal allocations for law-invariant robust utilities on $L^1$.
\textit{Finance and Stochastics} 18, 249-269.

\bibitem{S11} \textsc{Strzalecki, T.} (2011). 
Probabilistic sophistication and variational preferences. 
\textit{Journal of Economic Theory} 146, 2117-2125.

\end{thebibliography}
\end{document}